\documentclass[10pt,a4paper]{article}
\usepackage[latin1]{inputenc}
\usepackage{amsmath,amsthm}
\usepackage{amsfonts}
\usepackage{amssymb}
\usepackage{graphicx}
\usepackage{geometry}
\usepackage{algorithmic}
\usepackage[]{algorithm2e}

\newcommand{\eps}{\varepsilon}

\newcommand{\Enc}{\mathsf{Enc}}
\newcommand{\Dec}{\mathsf{Dec}}
\newcommand{\polylog}{\mathsf{polylog}}
\newcommand{\poly}{\mathsf{poly}}


\author{Kuan Cheng \thanks{ckkcdh@hotmail.com.\ Department of Computer Science, University of Texas at Austin.\ Supported by a Simons Investigator Award (\#409864, David Zuckerman) and NSF Award CCF-1617713.} \and Xin Li  \thanks{lixints@cs.jhu.edu.\ Department of Computer Science, Johns Hopkins University.\ Supported by NSF Award CCF-1617713 and NSF CAREER Award CCF-1845349.} \and Yu Zheng \thanks{yuzheng@cs.jhu.edu\ Department of Computer Science, Johns Hopkins University.\ Supported by NSF Award CCF-1617713 and NSF CAREER Award CCF-1845349.}}

\title{Locally Decodable Codes with Randomized Encoding}
\newtheorem{theorem}{Theorem}[]
\newtheorem{claim}{Claim}[]
\newtheorem{definition}{Definition}[]
\newtheorem{lemma}{Lemma}[]
\newtheorem{construction}{Construction}[]
\newtheorem*{remark}{Remark}

\begin{document}
	\maketitle
\begin{abstract}
We initiate a study of locally decodable codes with randomized encoding. Standard locally decodable codes are error correcting codes with a deterministic encoding function and a randomized decoding function, such that any desired message bit can be recovered with good probability by querying only a small number of positions in the corrupted codeword. This allows one to recover any message bit very efficiently in sub-linear or even logarithmic time. Besides this straightforward application, locally decodable codes have also found many other applications such as private information retrieval, secure multiparty computation, and average-case complexity.

However, despite extensive research, the tradeoff between the rate of the code and the number of queries is somewhat disappointing. For example, the best known constructions still need super-polynomially long codeword length even with a logarithmic number of queries, and need a polynomial number of queries to achieve a constant rate. In this paper, we show that by using a randomized encoding, in several models we can achieve significantly better rate-query tradeoff. In addition, our codes work for both the standard Hamming errors, and the more general and harder edit errors.  
\end{abstract}

\section{Introduction}
	\par Locally decodable codes (LDCs) are error correcting codes that allow one to decode any specific message symbol by querying only a few symbols from the received codeword, while still tolerating some (say a constant) fraction of errors. This enables the appealing feature of recovering any message symbol very efficiently without reading the entire codeword, and usually this can be done in sublinear time. This feature is especially useful when dealing with large data streams (e.g., databases), and when the recovering of the entire data stream is unnecessary. While locally decodable codes can be defined over an alphabet of any size at least two, in this paper we focus on the case of binary alphabet. Standard locally decodable codes have been studied extensively for Hamming errors, and have found many applications in theory and practice. These include for example private information retrieval, secure multiparty computation, average-case complexity and many more. We refer the reader to \cite{yekhanin2012locally} for an excellent survey on this topic. 
	
	However, in contrast to standard error correcting codes which can achieve excellent rate (the ratio between the length of the message and the length of the codeword), for locally decodable codes the situation is somewhat disappointing. Specifically, if one considers the most natural and interesting case of tolerating a constant fraction of errors, then standard error correcting codes can simultaneously achieve a constant rate. For locally decodable codes, the rate also depends on the number of queries for decoding. Specifically, consider the standard setting where our goal is to recover any message bit with probability $2/3$. With $r$ queries and for message length $k$, it is known that the codeword length needs to be at least $\Omega(k^{(r/r-1)})$ \cite{KT00} and hence if $r$ is a constant,\footnote{For the case of $r=2$ there is a stronger lower bound of $2^{\Omega(k)}$ \cite{KW04}.} the codeword length needs to be at least polynomially long and thus achieving a constant rate is impossible. However, even for the case of $r=O(\log k)$ there is no explicit construction of locally decodable codes with constant rate, and in fact the best known construction in this case fails to achieve codeword length polynomially in $k$. Currently, the best known constructions need  $r=k^{\Omega(1)}$ queries to achieve a constant rate \cite{KSY11}. Below we briefly summarize the parameters of the best known constructions for different regimes of query complexity, while fixing the success probability of decoding any bit to be say $2/3$. For constant query $r=O(1)$, the best known construction is by using matching vector codes \cite{DGY11}, which gives for example a code that uses $r=3 \cdot 2^{t-2}$ queries and can tolerate $\delta=O(1/r)$ fraction of errors. The codeword has length $\mathsf{exp} \mathsf{exp}_t ((\log k)^{1/t}(\log \log k)^{1-1/t})$, which is slightly sub-exponential. In contrast, the lower bound only gives a polynomial codeword length. For query complexity $r=O(\log k)$, the best known construction is also due to matching vector codes, which gives codeword length $\mathsf{exp} \mathsf{exp} (\log \log^{3-o(1)} k)$. Note that this is super-polynomial, while the lower bound already becomes useless. For query complexity $r=\log^t k$ with constant $t> 1$, Reed-Muller codes give codeword length $k^{1+1/(t-1)+o(1)}$. Finally, for query complexity $r=k^{\eps}$ for any constant $\eps>0$, multiplicity codes \cite{KSY11} give codeword length $O(k)$ while tolerating some constant $\delta=\delta(\eps)$ fraction of errors. Thus, there is a huge gap between the upper bounds and the lower bounds. This means that known constructions of locally decodable codes either need a lot of redundancy information, or need to use relatively large number of queries, which is undesirable. Closing this gap is an important open problem, which evidently appears to be quite hard. 
	
	Note that to ensure the property of local decoding, it is necessary to change the decoding from a deterministic algorithm into a randomized algorithm, since otherwise an adversary can just corrupt all the bits the decoding algorithm queries. Hence, the decoding becomes probabilistic and allows some small probability of failure. In this paper, with the goal of improving the rate of locally decodable codes in mind, we initialize the study of a relaxed version of locally decodable codes which is also equipped with randomized encoding. Now the probability of decoding failure is measured over the randomness of both the encoding and decoding. 
	
	There are several questions that we need to clarify in this new model. The first question is: \emph{Does the decoding algorithm know and use the randomness of the encoding?} Although we cannot rule out the possibility of constructions where the decoding algorithm can succeed \emph{without} knowing the randomness of the encoding, in this paper we only consider the most natural setting, where the decoding algorithm indeed knows and uses the randomness of encoding. After all, without the encoding's randomness the decoding algorithm does not even completely know the encoding function. The second question, as we are considering an adversarial situation, is: \emph{Is the adversary allowed to know the randomness of the encoding?}
	
	This turns out to be a very interesting question. Of course, an adversary who knows the randomness of encoding is much stronger, and thus we need a stronger construction of codes as well. In this paper, we provide a partial answer to this question, by showing that under some reasonable assumption of the code, a locally decodable code in our model where the adversary knows  the randomness of encoding is equivalent (up to constant factors) to a standard locally decodable codes. The assumption is that the code has the following homogenous property:
	
	\emph{Property (*) :} For any fixing of the encoding's randomness, any fixed error pattern, and any fixed target message bit, the success probability (over the decoding's randomness) of decoding this bit is the same for all possible messages.
	
	We note that this property is satisfied by all known constructions of standard locally decodable codes. Part of the reason is that all known constructions are linear codes, and the decoding only uses non adaptive queries. We now have the following theorem.
	
	\begin{theorem}
	Suppose there is an LDC with randomized encoding for Hamming errors, with message length $k$, codeword length $n$, that can tolerate $\delta$ fraction of errors and successfully decode any message bit with probability $1-\eps$ using $r$ queries. Then there also exists a (possibly non explicit) standard LDC,  with message length $k/2$, codeword length $n$, that can tolerate $\delta$ fraction of errors and successfully decode any message bit with probability $1-2\eps$ using $r$ queries.
	\end{theorem}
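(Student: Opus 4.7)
The strategy is to derandomize the encoding by fixing a ``good'' value $R^\ast$ of the encoder's randomness, at the cost of halving the effective message length. Property~(*) is the key lever: for any fixing of $R$, any admissible error pattern $e$, and any target bit $i$, the decoding success probability $p(R, e, i)$ depends only on the triple $(R, e, i)$. The randomized-LDC guarantee then reads $\mathbb{E}_R[p(R, e, i)] \geq 1-\eps$ for every admissible pair $(e, i)$, and Markov's inequality applied to $1 - p(R, e, i)$ gives $\Pr_R[p(R, e, i) < 1 - 2\eps] \leq 1/2$ for every $(e, i)$.

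Assuming we can produce a specific $R^\ast$ together with a subset $S \subseteq [k]$ of size at least $k/2$ such that $p(R^\ast, e, i) \geq 1 - 2\eps$ for every admissible $e$ and every $i \in S$, the standard LDC is built as follows. To encode $m' \in \{0,1\}^{k/2}$, form $m \in \{0,1\}^k$ by placing $m'$ on the positions indexed by $S$ and zeros elsewhere, and output $\Enc(m; R^\ast)$. The resulting encoder is deterministic and uses codeword length $n$. To recover the $j$th bit of $m'$, hard-code $R^\ast$ into the original decoder and invoke it on the corresponding position $i_j \in S$; by property~(*) the success probability equals $p(R^\ast, e, i_j) \geq 1 - 2\eps$ irrespective of the remaining bits of $m'$, and both $r$ and $\delta$ are inherited from the original construction.

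What remains, and what I expect to be the main obstacle, is establishing the existence of such $(R^\ast, S)$ via the probabilistic method. Sampling $R$ uniformly and summing the Markov bound above over $i$ shows that for \emph{each} fixed $e$ the expected number of bad positions is at most $k/2$, but a naive union bound across the exponentially many admissible error patterns fails to yield a single $R^\ast$ whose joint bad set $T_{R^\ast} := \{i : \exists\,e\text{ admissible with } p(R^\ast, e, i) < 1 - 2\eps\}$ has size at most $k/2$. My plan to overcome this is to sample $R$ and $S$ jointly, letting $S$ be a uniformly random $k/2$-subset of $[k]$, and then to bound, for each admissible $e$, the probability that $S$ meets a bad position via a second-moment or concentration argument; combining this with the Markov bound should give a positive probability that \emph{no} admissible $e$ exposes any bad $i \in S$, from which the desired deterministic pair $(R^\ast, S)$ is extracted by averaging. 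The delicate point will be making this combination tight enough to yield exactly the claimed $(k/2,\,1-2\eps)$ trade-off without losing further constant factors.
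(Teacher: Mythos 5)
Your high-level strategy (fix a good $R^\ast$ and sacrifice half the message coordinates) matches the paper's, but you have misread the hypothesis and this creates a genuine gap. The theorem is stated in the model where the adversary \emph{knows} the encoder's randomness $R$ and may pick the error pattern adaptively as a function of $R$ (this is the whole point of the surrounding discussion: the theorem shows randomized encoding buys essentially nothing against such an adversary). Under Property (*), the premise therefore says, for each target bit $i$,
\[
\mathbb{E}_{R}\Bigl[\min_{e}\,p(R,e,i)\Bigr]\;\geq\;1-\eps,
\]
with the minimum over admissible error patterns taken \emph{inside} the expectation, because the adaptive adversary's strategy $e^\ast(R)=\arg\min_e p(R,e,i)$ is a legal attack. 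Markov applied to $1-\min_e p(R,e,i)$ then gives $\Pr_R[\min_e p(R,e,i) < 1-2\eps]\leq 1/2$ in a single step; there is no union bound over $e$ to perform, and the rest is the averaging over $i$ that you describe. You instead read the guarantee as $\mathbb{E}_R[p(R,e,i)]\geq 1-\eps$ for each \emph{fixed} $e$ (the oblivious model), under which, as you correctly observe, a union bound over the exponentially many $e$'s fails.

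The second-moment/joint-sampling plan you propose cannot close that gap, because under the oblivious reading the conclusion is simply false. Consider a toy example with $R\in\{0,1\}$ uniform and two error patterns $e_0,e_1$, where $p(0,e_0,i)=1$, $p(0,e_1,i)=0$, $p(1,e_0,i)=0$, $p(1,e_1,i)=1$. Then $\mathbb{E}_R[p(R,e_j,i)]=1/2$ for both $j$, so the oblivious premise is met with $\eps=1/2$, yet $\min_e p(R,e,i)=0$ for every fixing of $R$, and no fixed $R^\ast$ yields a standard LDC for bit $i$ against worst-case errors. So the obstacle you flagged is real, but its resolution is not a sharper probabilistic argument; it is to read the hypothesis in the adaptive-adversary model, where the Markov step already delivers the worst-case-$e$ guarantee for a $1/2$ fraction of random strings.
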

	
	\begin{proof}
	Fix a specific random string used by the encoding, and a target message bit for decoding (say bit $i$). Since now the encoding becomes a deterministic function, and there are only finite number of possible error patterns, there is some error pattern that is the \emph{worst} for decoding, i.e., the one that minimizes the success probability of decoding bit $i$. We say the fixed random string is \emph{good} for bit $i$ if this success probability is at least $1-2\eps$. Note that the overall success probability of decoding bit $i$ is at least $1-\eps$, thus by a Markov argument the probability that a random string is good for bit $i$ is at least $1/2$.
	
	Now again, by an averaging argument, there exists a fixed string that is good for at least $1/2$ fraction of the message bits. Fix this string and the encoding now becomes deterministic. Without loss of generality assume that the string is good for the first half of the message bits. Now fix the rest of the message bits to any arbitrary string (e.g, all $0$), we now have a standard LDC with message length $k/2$, codeword length $n$, that can tolerate $\delta$ fraction of errors and successfully decode any message bit with probability $1-2\eps$ using $r$ queries.
  \end{proof}
  
  The above theorem shows that, in order to get significantly better rate-query tradeoff, we need to either forbid the adversary to know the randomness of encoding, or to construct codes that do not satisfy property (*) (e.g., using adaptive encoding or adaptive decoding). In this paper we study the first setting, which is naturally simpler, and we leave codes in the second setting as an interesting open problem. In general, there are two different models where the adversary is not allowed to know the randomness of encoding:

\paragraph{Shared randomness} In this model, the encoder and the decoder share a private uniform random string. Thus, the adversary does not know the randomness used by the encoder; but he can add arbitrary errors to the codeword, including looking at the codeword first and then adaptively add errors.

\paragraph{Oblivious channel} In this model, the encoder and the decoder do \emph{not} share any randomness. However, the communication channel is oblivious, in the sense that the adversary can add any error pattern \emph{non adaptively}, i.e., without looking at the codeword first.

Here we study both models.\ We now give our formal definitions of locally decodable codes with randomized encoding.
  

\begin{definition}
[LDC with a fixed Failure Probability]
		
		An $(n, k, \delta, q, \eps)$ LDC with randomized encoding consists of a pair of randomized functions $\{\Enc,\Dec\}$, such that:
		
\begin{itemize}
		
	\item $\Enc:\{0,1\}^k\rightarrow \{0,1\}^n$ is the encoding function. For every message $x\in \{0,1\}^k$, $ y = \Enc(x ) \in \{0,1\}^n$ is the corresponding codeword.
			
	\item $\Dec:[k]\times\{0,1\}^* \rightarrow \{0,1\}$ is the decoding function. If the adversary adds at most $\delta n$ errors to the codeword, then for every $i\in [k]$, every $y\in \{0,1\}^*$ which is a corrupted codeword,
		\[ \Pr[\Dec(i,y) = x_i] \geq 1-\eps, \]
	where the probability is taken over the randomness of both $\Enc$ and $\Dec$.
			
	\item $\Dec$ makes at most $q$ queries to $y$.
			
\end{itemize}

\end{definition}

\begin{remark}
Two remarks are in order. First, our definition is quite general in the sense that we don't restrict the type of errors. In particular, in this paper we study both Hamming errors (i.e., bit flips) and edit errors (i.e., insertions, deletions, and substitutions). Second, as mentioned before, the model can either assume shared private randomness or an oblivious adversarial channel.
\end{remark}

The above definition is for a fixed failure probability $\eps$. However, sometimes it is desirable to achieve a smaller failure probability. For standard locally decodable codes, this can usually be done by repeating the decoding algorithm independently for several times, and then taking a majority vote. This decreases the failure probability at the price of increasing the query complexity. In contrast, in our new model, this approach is not always feasible. For example, in the extreme case one could have a situation where for some randomness used by the encoding, the decoding succeeds with probability 1; while for other randomness used by the encoding, the decoding succeeds with probability 0. In this case repeating the decoding algorithm won't change the failure probability. To rule out this situation, we also define a locally decodable codes with \emph{flexible} failure probability.

\begin{definition}
[LDC  with Flexible Failure Probabilities]
		
		An $ (n, k, \delta)$ LDC with randomized encoding and query complexity function $ q: \mathbb{N} \times [0,1] \rightarrow \mathbb{N} $, consists of a pair of randomized algorithms $\{\Enc,\Dec\}$, such that:
		
\begin{itemize}
		
	\item $\Enc:\{0,1\}^k\rightarrow \{0,1\}^n$ is the encoding function. For every message $x\in \{0,1\}^k$, $ y = \Enc(x ) \in \{0,1\}^n$ is the corresponding codeword.
			
	\item $\Dec:[k]\times\{0,1\}^* \rightarrow \{0,1\}$ is the decoding function. If the adversary adds at most $\delta n$ errors to the codeword, then for every $i\in [k]$, every $y\in \{0,1\}^*$ which is a corrupted codeword, and every $ \eps \in [0,1]$, 
		\[ \Pr[\Dec(i, y) = x_i] \geq 1-\eps, \]
	while $\Dec$ makes at most $q = q(n, \eps) $ queries to $y$.
	The probability is taken over the randomness of both $\Enc$ and $\Dec$.

\end{itemize}
				
\end{definition}

Again, this definition can apply to both Hamming errors and edit errors, and both the model of shared randomness and the model of an oblivious adversarial channel.

To the best of our knowledge, locally decodable codes with randomized encoding have not been studied before in the literature. In this paper we provide several constructions with rate-query tradeoff better than standard locally decodable codes. We have the following theorems. The first one deals with a fixed decoding failure probability.

\begin{theorem}
\label{thm:intro1}
There exists a constant $0<\delta<1$ such that for every $k \in \mathbb{N}$ there is an efficient construction of $(n, k , \delta, q,  \eps)$ LDC with randomized encoding,  where $n=O(k)$. For any $\eps \in [0,1]$, the query complexity is as follows.
\begin{itemize}
\item for Hamming errors with shared randomness, $q = O(\log \frac{1}{\eps})$;

\item for Hamming errors with an oblivious channel, $q = O(\log k\log \frac{1}{\eps})$;

\item for Edit errors, both with shared randomness and with an oblivious channel,  $q = \polylog k\log \frac{1}{\eps}$.

\end{itemize} 

\end{theorem}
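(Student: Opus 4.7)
The three cases call for different ingredients, so I plan to handle them separately. For Hamming errors with shared randomness, my approach is to apply a shared random permutation $\pi$ on top of a block-wise encoding. I partition the message $x$ into blocks of size $b = \Theta(\log(1/\eps))$, encode each block with an asymptotically good constant-rate binary code of constant relative distance $\delta' > \delta$ (e.g.\ a Spielman-type code), concatenate to obtain a string of length $n = O(k)$, and apply $\pi$ to form the codeword. To decode bit $i$, the decoder locates the encoded block containing $i$, uses $\pi$ to identify its $O(b)$ positions in the received word, queries them, and runs the inner decoder. Because the adversary's error pattern is chosen independently of $\pi$, the errors landing in any fixed block behave like a hypergeometric sample with mean $\delta b'$; a Chernoff tail bound with $b' = \Theta(\log(1/\eps))$ gives failure probability at most $\eps$, for total query cost $O(\log(1/\eps))$.

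For Hamming errors with an oblivious channel, my plan is to simulate the shared randomness by transmitting a pseudorandom seed $R$ of length $s = O(\log k)$ that defines the permutation $\pi_R$. I encode $R$ with a standard constant-rate, constant-distance code to obtain $C_R(R)$ of length $L = O(\log k)$, and place $\tau = \Theta(n/L)$ identical copies of $C_R(R)$ at deterministic positions in the codeword. A simple counting argument using $|E| \leq \delta n$ shows that destroying one copy costs the adversary $\Omega(L)$ of its error budget, so it can destroy at most a constant fraction (say $1/3$) of the copies; the remaining majority all decode correctly to $R$. The decoder samples $O(\log(1/\eps))$ random copies (using its own randomness), reads each copy in full ($L$ bits), runs $C_R^{-1}$ on each, and takes a majority vote. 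By Chernoff, this recovers $R$ with probability $\geq 1 - \eps/2$ at a cost of $O(L \log(1/\eps)) = O(\log k \cdot \log(1/\eps))$ queries. It then runs the shared-randomness decoder for $O(\log(1/\eps))$ additional queries on the data part, giving the claimed total.

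For edit errors, the new difficulty is that insertions and deletions destroy positional information, so the decoder cannot directly locate the block containing bit $i$ even if it knows $\pi$. I plan to layer synchronization machinery — for instance, synchronization strings or efficient document exchange — on top of the Hamming construction: each block is flanked by short synchronization data so that after edit corruption, the decoder can re-identify the local block boundaries around position $i$ using $\polylog k$ extra queries. Composing this synchronization step with the Hamming constructions (and, in the oblivious case, encoding each copy of the seed with an analogous edit-tolerant inner code so that copies remain locatable and decodable after edits) should give $q = \polylog k \cdot \log(1/\eps)$ in both randomness models.

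I expect the edit-error case to be the main obstacle, because edit errors couple the recovery of positional information with the recovery of content and force the synchronization to be reliable \emph{locally} around every queried position rather than globally. In particular, in the oblivious-channel edit-error model one must simultaneously protect the pseudorandom seed and the sync markers against $\delta n$ edit operations while still allowing only polylogarithmically many local reads to pinpoint a block; I anticipate that the analysis of this local-synchronization step is where most of the technical work will live.
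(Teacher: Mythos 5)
Your overall blueprint --- random permutation plus block-wise encoding, with a locally-recoverable pseudorandom seed in the oblivious case --- matches the paper, but there are three concrete gaps. First, in the shared-randomness Hamming case you assert that the adversary's error pattern is independent of $\pi$, but this does not hold without further work: the adversary sees $y=\pi(y')$ and also chooses the message. Take $x$ equal to zero except in the target block and suppose $\Enc_0$ is linear; then $y'$ is supported only on that block, the ones of $y$ reveal exactly where that block's nonzero coordinates landed, and the adversary can flip all of them at cost $n_0 = O(\log(1/\eps)) \ll \delta n$, turning the queried word into the all-zero codeword and making $\Dec_0$ fail with probability $1$. The paper forestalls this by XORing the permuted string with a shared uniform random mask $w$, so the transmitted codeword is statistically independent of $\pi$; you need this step (and likewise in the shared-randomness edit case) before invoking the hypergeometric concentration bound.

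Second, in the oblivious Hamming case your query count does not come out. To get concentration on a block of size $O(\log(1/\eps))$ under a pseudorandom permutation you need an almost $\kappa=\Theta(\log(1/\eps))$-wise independent permutation, so by Theorem~\ref{thm:almosttwiserandperm} the seed has length $s=\Theta(\kappa\log n+\log(1/\eps))=\Theta(\log(1/\eps)\log k)$, not $O(\log k)$. With the corrected $s$, your copy-and-majority scheme reads $O(\log(1/\eps))$ copies of length $L=\Theta(s)$ each, for $O(\log k\cdot\log^2(1/\eps))$ queries --- an extra $\log(1/\eps)$ factor. The paper instead encodes the seed via a Reed--Solomon code over alphabet $\{0,1\}^{\Theta(\log n)}$ concatenated with a good binary inner code (Lemma~\ref{lem:smallMsgLDC}); the decoder samples $O(s/\log n)$ random inner blocks, decodes each, and notes that the resulting symbols themselves form a shorter Reed--Solomon codeword, so RS decoding recovers the seed with failure probability $2^{-\Theta(s/\log n)}=\eps^{\Theta(1)}$ using only $O(s)$ queries, which gives $O(\log k\log(1/\eps))$. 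Third, for edit errors your sketch is too thin to evaluate: the paper does not rely on synchronization strings but packages each $\log k$-bit symbol of the permuted string together with its index (and, in the shared-randomness model, a random mask) into a block of an asymptotically good edit-error code, and then invokes the Ostrovsky--Paskin-Cherniavsky ``search a corrupted sorted list'' primitive, which locates any target block with $\polylog k$ queries while correctly recovering a $1-O(\delta)$ fraction of blocks. It is not clear that local synchronization markers alone give a comparable $\polylog k$-query index search, and that step is where the real work in the edit case lies.
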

We can compare our theorem to standard locally decodable codes. Achieving $q = O(\log \frac{1}{\eps})$ is equivalent to achieving constant query complexity for success probability $2/3$. For standard LDCs we know in this case it is impossible to achieve a constant rate, and the best known construction still has sub-exponential codeword length. Similarly, achieving $q = O(\log k \log \frac{1}{\eps})$ is equivalent to achieving query complexity $O(\log k)$ for success probability $2/3$, and for standard LDCs the best known construction still has super-polynomial codeword length. In contrast, we can achieve a constant rate.

The next theorem deals with a flexible decoding failure probability. We note that one way to achieve this is to repeat the encoding several times independently, and send all the obtained codewords together. The decoding will then decode from each codeword and take a majority vote. However, this approach can decrease the rate of the code dramatically. For example, if one wishes to reduce the failure probability from a constant to $2^{-\Omega(k)}$, then one needs to repeat the encoding for $\Omega(k)$ times and the rate of the code decreases by a factor of $1/k$. In this work we use a different construction that can achieve a much better rate.

\begin{theorem}
\label{thm:intro2}
There exists a constant $0<\delta<1$ such that for every $k \in \mathbb{N}$ there is an efficient construction of $(n, k , \delta)$ LDC with randomized encoding and flexible failure probability,  where $n =  O(k \log k)$. The query complexity is as follows.
\begin{itemize}
\item for Hamming errors with shared randomness, $q = O(\log k\log \frac{1}{\eps})$;

\item for Hamming errors with an oblivious channel,  $q = O(\log^2 k\log \frac{1}{\eps})$;

\item for Edit errors, both with shared randomness and with an oblivious channel,  $q = \polylog k\log \frac{1}{\eps}$.

\end{itemize} 

\end{theorem}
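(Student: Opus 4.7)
The plan is to construct the flexible code by independently repeating Theorem~\ref{thm:intro1}'s fixed-failure code. Fix a small constant $\eps_0$ (say $1/100$) and let $C_0$ be the Theorem~\ref{thm:intro1} construction instantiated at failure probability $\eps_0$; its codeword length is $n_0 = O(k)$, it tolerates a constant $\delta_0$ fraction of errors, and its base-decoder query cost $q_0$ is $O(1)$, $O(\log k)$, and $\polylog k$ in the three respective settings. The new encoder applies $C_0$ independently $T = \Theta(\log k)$ times on the same message with fresh encoder randomness each time, outputs the concatenation of length $T n_0 = O(k \log k)$, and the overall tolerated error rate $\delta$ is taken to be a small constant multiple of $\delta_0$.

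To decode bit $i$ at target failure probability $\eps$, the decoder queries \emph{every} copy: for each $j \in [T]$, it runs the $C_0$-decoder $t = \Theta(\log(1/\eps))$ times with fresh decoder randomness and takes the within-copy majority; then it outputs the final majority across the $T$ copies. The total query cost is $T \cdot t \cdot q_0$, which gives $O(\log k \log(1/\eps))$, $O(\log^2 k \log(1/\eps))$, and $\polylog k \cdot \log(1/\eps)$ respectively, matching the theorem.

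The analysis combines two facts. First (error spreading), since the adversary has total budget $\delta T n_0$, a pigeonhole argument bounds the number of ``over-corrupted'' copies---those receiving more than $\delta_0 n_0$ errors---by $(\delta/\delta_0) T$, a tiny constant fraction by choice of $\delta$; this holds for any adversary in either model. Second (encoder goodness), Theorem~\ref{thm:intro1} bounds the overall failure probability on each copy by $\eps_0$, so Markov's inequality implies that the encoder randomness of copy $j$ is ``bad for bit $i$'' (conditional decoder failure exceeding $1/10$) with probability at most $10\eps_0 = 1/10$. A Chernoff bound across the $T$ independent copies followed by a union bound over the $k$ bits yields that, with probability $1 - 1/\poly(k)$ over the joint encoder randomness, every bit $i$ has at least $(9/10 - o(1))T$ encoder-good copies. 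Intersecting with the in-budget copies, at least $0.85 T$ copies are simultaneously in-budget and encoder-good; for each such copy the within-copy amplified majority is correct with probability $1 - 2^{-\Omega(t)}$, and a final Chernoff/union bound on the overall majority yields the claimed $\eps$ failure guarantee.

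The main obstacle will be applying Theorem~\ref{thm:intro1}'s per-copy guarantee in the face of a cross-copy adaptive adversary whose errors on copy $j$ can depend on every other copy's codeword, so that the ``induced'' adversary on copy $j$ is not a fixed budget-respecting function of $c_j$ alone. I would handle this by a \emph{capped-adversary} trick: for each fixing of the other copies' randomness, define $\tilde A_j$ acting on copy $j$ alone that mimics the true adversary whenever the induced copy-$j$ error count is at most $\delta_0 n_0$ and passes the codeword through unchanged otherwise. Then $\tilde A_j$ is a legitimate Theorem~\ref{thm:intro1} adversary, so its conditional failure probability is at most $\eps_0$; the discrepancy between $\tilde A_j$ and the real adversary is exactly the over-corruption event and is absorbed into the error-spreading loss $\delta/\delta_0$. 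The edit-error setting is handled identically by substituting the edit-error version of $C_0$.
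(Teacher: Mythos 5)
Your approach has a fundamental gap that the paper itself anticipates and explicitly warns against (see the discussion just before Theorem~\ref{thm:intro2} in the introduction: ``one way to achieve this is to repeat the encoding several times independently, and send all the obtained codewords together \dots However, this approach can decrease the rate of the code dramatically'').

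The problem is a hard floor on the achievable failure probability. With $T=\Theta(\log k)$ independent copies each ``encoder-good'' for bit $i$ with probability only $\geq 9/10$, a Chernoff bound over encoder randomness gives that fewer than a majority of copies are encoder-good with probability $e^{-\Omega(T)} = 1/\mathrm{poly}(k)$ --- and this event depends \emph{only} on the encoder's coins, so no amount of within-copy decoder repetition $t$ can reduce it. (Indeed, in the paper's Construction~\ref{construct:randEncLDCHammingsr} the decoder is \emph{deterministic} given the shared permutation and mask, so within-copy majority over $t$ fresh decoder runs is vacuous; and even in the oblivious setting the dominant failure mode is the permutation mapping too many errors into the target block, again an encoder-randomness event that amplification over decoder coins cannot touch.) Thus your construction bottoms out at failure probability $k^{-O(1)}$. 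But the flexible-failure definition requires $\Pr[\mathsf{Dec}=x_i]\geq 1-\eps$ for \emph{every} $\eps\in[0,1]$ with only $q=O(\log^2 k\log\frac{1}{\eps})$ queries; for $\eps$ slightly below your floor (say $\eps = k^{-2c}$) this is $O(\log^3 k)$ queries, far less than the $O(k\log k)$ needed to just read the whole codeword, so you cannot fall back to brute force either. Increasing $T$ to bridge the gap pushes $n$ past $O(k\log k)$.

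The paper's actual construction (Construction~\ref{constr:flexOblivious}) takes a genuinely different route: it instantiates the fixed-failure code $\log n$ times at \emph{double-exponentially spaced} failure probabilities $\eps_i = 2^{-2^i}$ (so the block size at level $i$ is $\Theta(2^i)$, ranging from constant to $\Theta(k)$), stacks the resulting codewords as rows of a $\log n \times (n/\log n)$ matrix, and encodes each \emph{column} of $O(\log n)$ bits with an asymptotically good binary code. At decoding time, given a target $\eps$, one reads only the row $i$ whose $2^{-2^i}\leq\eps$, paying an extra $O(\log k)$ factor per query to unpack the column. This is what gives the $O(k\log k)$ length with failure probabilities reaching $2^{-\Omega(k/\log k)}$, after which reading the whole codeword is within the $O(\log^2 k\log\frac{1}{\eps})$ budget. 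The per-row correctness still rests on the fixed-failure theorem; there is no independent-repetition/majority step at all, and the ``error spreading'' you use shows up instead as a pigeonhole bound on how many column codewords can exceed their decoding radius. You should replace your repetition scheme with this matrix-of-geometrically-scaled-codes construction; your capped-adversary observation is not needed once each row is a single code instance.
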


We can also compare this theorem to standard locally decodable codes. Again, achieving $q = O(\log k \log \frac{1}{\eps})$ is equivalent to achieving query complexity $O(\log k)$ for success probability $2/3$, and for standard LDCs the best known construction still has super-polynomial codeword length. Similarly, achieving $q = O(\log^2 k \log \frac{1}{\eps})$ is equivalent to achieving query complexity $O(\log^2 k)$ for success probability $2/3$, and for standard LDCs the best known construction using Reed-Muller codes needs codeword length $k^{2+o(1)}$. In contrast, we can achieve codeword length $O(k \log k)$.

\subsection{Related works.}	
As mentioned earlier, standard locally decodable codes for Hamming errors have been studied extensively, with many beautiful constructions such as \cite{Yekhanin08}, \cite{Efremenko09}, \cite{DGY11}, and we refer the reader to \cite{yekhanin2012locally} for an excellent survey.

For edit errors, the situation is quite different. Even for standard error correcting codes (not locally decodable), the progress for edit errors is much slower than that for Hamming errors. For example, the first asymptotically good code for edit errors is not known until the work of Schulman and Zuckerman in 1999  \cite{schulman1999asymptotically}. In recent years, there has been a line of research  \cite{7835185}, \cite{guruswami2016efficiently}, \cite{BukhV16}, \cite{Belazzougui2015EfficientDS}, \cite{haeupler2017synsimucode}, \cite{CJLW18}, \cite{haeupler2018optimal}, \cite{CJLW19} that makes constant progress and finally culminates in \cite{CJLW18}, \cite{haeupler2018optimal} which give codes that can correct $t$ edit errors with redundancy (the extra information needed for the codeword) size $O(t \log^2(k/t))$, where $k$ is the message length. This is within a $\log(k/t)$ factor to optimal. In \cite{CJLW18}, the authors also give a code that can correct $t$ edit errors with redundancy $O(t \log k)$, which is optimal up to constants for $t \leq k^{1-\alpha}$, any constant $\alpha>0$. In another line of work  \cite{haeupler2017synchronization}, \cite{HS17c}, \cite{CHLSW18}, near optimal codes for edit errors over a larger alphabet are constructed, using the constructions of \emph{synchronization strings}.
	
In terms of locally decodable codes, no known constructions for Hamming errors can be used directly as a code against edit error. This is due to the shift caused by insertions and deletions, which causes the loss of information about the positions of the bits queried. The only known previous work of locally decodable codes for edit errors is the work by Ostrovsky and Paskin-Cherniavsky \cite{ostrovsky2015locally}. There the authors provided a ``compiler'' to transform a standard locally decodable code for Hamming error to a locally decodable code for edit error. Their construction first takes a standard LDC as a black box, then concatenate it with an asymptoticaly good code for edit errors. They showed that the fraction of error tolerated and the rate of the resulted code both only decrease by a constant factor. Meanwhile, the query complexity increases by factor of $polylog(n)$, where $n$ is the length of the codeword. 

Standard error correcting codes with randomized encoding (but deterministic decoding) have been studied by Guruswami and Smith \cite{guruswami2016optimal}. They constructed such codes for an oblivious adversarial channel that can add $\delta$ fraction of errors, where the rate of the codes approaches the Shannon capacity $1-H(\delta)$. Their encoding and decoding run in polynomial time and the decoding error is exponentially small.

\subsection{Technique Overview}
\subsubsection{Hamming Error}
 We start from constructions for Hamming errors and let the fraction of errors be some constant $\delta$. To take advantage of a randomized encoding, our approach is to let the encoder perform a \emph{random permutation} of the codeword. Assuming the adversary does not know the randomness of the encoding, this effectively reduces adversarial errors into random errors, in the following sense: if we look at any subset of coordinates of the codeword before the random permutation, then the expected fraction of errors in these coordinates after the random permutation is also $\delta$. A stronger statement also follows from concentration bounds that with high probability this fraction is not far from $\delta$, and in particular is also a constant. This immediately suggests the following encoding strategy: first partition the message into blocks, then encode each block with a standard asymptotically good code, and finally use a random permutation to permute all the bits in all resulted codewords. Now to decode any target bit, one just needs to query the bits in the corresponding codeword for the block that contains this bit. As the error fraction is only a constant, a  concentration bound for \emph{random permutations} shows that in order to achieve success probability $1-\eps$, one needs block length $O(\log (1/\eps))$ and this is also the number of queries needed. To ensure that the adversary does not learn any information about the random permutation by looking at the codeword, we also use the shared randomness to add a mask to the codeword (i.e., we compute the XOR of the actual codeword with a random string). This gives our codes for a fixed failure probability, in the model of shared randomness. 

To modify our construction to the model of an oblivious channel, note that here we don't need a random mask anymore, but the encoder has to tell the decoder the random permutation used. However the description of the random permutation itself can be quite long, and this defeats our purpose of local decoding. Instead, we use a \emph{pseudorandom permutation}, namely an almost $\kappa = \Theta(\log \frac{1}{\eps})$ wise independent permutation with error $\eps/3$. Such a permutation can be generated by a short random seed with length $r=O(\kappa \log n + \log(1/\eps )) $, and thus is good enough for our application. The encoder will first run the encoding algorithm described previously to get a string $y$ with length $n/2$, and then concatenate $y$ with an encoded version $z \in \{0,1\}^{n/2}$ of the random seed.  The decoder will first recover the seed and then perform local decoding as before. To ensure the seed itself can be recovered by using only local queries, we encode the seed by using a concatenation code where the outer code is a  $(n_1, k_1, d_1)$ Reed-Solomon code with alphabet size $\mathsf{poly}(n)$, and the inner code is an  $(n_2, k_2, d_2)$ asymptotically good code for $O(\log n)$ bits, where $  n_1 n_2 = n/2, k_1 k_2 = r$, $d_1 = n_1 - k_1 + 1$, $n_2 = O(\log n)$. That is, each symbol of the outer code is encoded into another block of $O(\log n)$ bits. Now to recover the seed, the decoder first randomly chooses  $8k_1$  blocks of the concatenated codes and decodes the symbols of the outer code. These decoded symbols will form a new (shorter) Reed-Solomon code, because they are the evaluations of a degree $k_1$ polynomial on $8k_1$ elements in $\mathbb{F}_2^{k_2}$.
Furthermore, this code is still enough to recover the seed, because the seed is short and with high probability there are only a small constant fraction of errors in the decoded symbols. 
So the decoder can perform a decoding of the new Reed-Solomon code to recover the random seed with high probability. Note that the number of queries of this decoding is $8k_1 n_2 = O(r)$.

We now turn to our constructions for flexible failure probability. Here we want to achieve failure probability from a constant to say $2^{-k}$. For each fixed failure probability we can use the previous construction, and this means the block size changes from a constant to $O(k)$. Instead of going through all of these sizes, we can just choose $O(\log k)$ sizes and ensure that for any desired failure probability $\eps$, there is a block size that is at most twice as large as the size we need. Specifically, the block sizes are $ 2^{2^i}, i = 1,2, \ldots, \log k $. In this way, we have $O(\log k)$ different codewords, and we combine them together to get the final codeword. Now for any failure probability $\eps$, the decoder can look for the corresponding codeword (i.e., the one where the block size is the smallest size larger than $ 1/\eps $)  and perform local decoding. However, we cannot simply concatenate these codewords together since otherwise $\delta$ fraction of errors can completely ruin some codeword. Instead, we put them up row by row into a matrix of $O(\log k)$ rows, and we encode each column of $O(\log k)$ bits with another asymptotically good binary code. Finally we concatenate all the resulted codewords together into our final codeword of length $O(k \log k)$. Note that now to recover a bit for some codeword, we need to query a whole block of $O(\log k)$ bits, thus the query complexity increases by a $\log k$ factor while the rate decreases by a $\log k$ factor. 
\subsubsection{Edit Error}

Our constructions for edit errors follow the same general strategy, but we need several modifications to deal with the loss of index information caused by insertions and deletions. Our construction for edit errors can achieve the same rate as those for Hamming case, but the query complexity for both models increases by a factor of $\polylog k$. We now give more details. We start with the construction of an $(n = O(k), k , \delta = \Omega(1), q = \polylog k \log(1/\eps), \eps)$ LDC for any $\eps\in (0,1)$, in the model with shared randomness. As in the case of Hamming errors, the shared randomness is used in two places: a random permutation $\pi$ and some random masks to hide information. The construction has two layers.

For the first layer, view the message  $x\in \{0,1\}^k$ as a sequence over the alphabet $\{0,1\}^{\log k}$ and divide it into $k/(k_0\log k)$ small blocks each containing $k_0$ symbols from $\{0,1\}^{\log k}$.  Then, we encode each block with $\Enc_0$, which is an asymptotically good $(n_0,k_0,d_0)$ code for Hamming errors over the alphabet $\{0,1\}^{\log k}$. Concatenating these $k/(k_0\log k)$ codewords gives us a string of length $N = \frac{n_0k}{k_0 \log k}$ over the alphabet $\{0,1\}^{\log k} $. We then permute these $N$ symbols using $\pi$ to get $y'=B_1\circ B_2\circ \cdots\circ B_{N}$ with $B_i \in \{0,1\}^{\log k}$. Since $n_0/k_0$ is a constant, we have $N<k$ for large enough $k$.

We are now ready to do the second layer of encoding. In the following, for each $i\in [N]$, $b_i\in \{0,1\}^{\log k}$ is the binary representation of $i$ and $r_i \in \{0,1\}^{\log k}$ is a random mask shared between the encoder and decoder. $\mathbb{C}_0: \{0,1\}^{2\log k}\rightarrow \{0,1\}^{10\log k}$ is an asymptotically good code for edit errors and the $\oplus$ notation means bit-wise XOR. For  each $i\in [N]$, we compute $B'_i = \mathbb{C}_0(b_i\circ (B_i\oplus r_i))\in \{0,1\}^{10\log k}$. We output $y = B'_1\circ  B'_2\circ \cdots \circ B'_{N}\in (\{0,1\}^{10\log k })^{N}$, which is of length $n = 10\frac{n_0}{k_0}k = O(k)$. Note that the use of the random masks $r_i$'s is to hide the actual codeword, so that the adversary cannot learn any information about the permutation. 

To decode a given message bit, we need to find the corresponding block. The natural idea to do this is by binary search. However, this may fail with high probability due to the constant fraction of edit errors. To solve this issue, we use the techniques developed by \cite{ostrovsky2015locally}, which use a similar second layer encoding and give a searching algorithm with the following property: Even with $\delta$-fraction of edit errors, at least $1-O(\delta)$ fraction of the blocks can be recovered correctly with probability at least $1-\mathsf{neg}(k)$. The algorithm makes a total of $\polylog k$ queries to the codeword for each search. 

We now describe the decoding. Assume the bit we want to decode lies in the $i$-th block of $x$. Let $C_i\in (\{0,1\}^{\log k})^{n_0}$ be the codeword we get from encoding the $i$-th block using $\Enc_0$. With the information of $\pi$, we can find out $n_0$ indices $i_1$ to $i_{n_0}$ such that $C_i$ is equal to $B_{\pi^{-1}(i_1)} \circ B_{\pi^{-1}(i_2)} \circ \cdots \circ B_{\pi^{-1}(i_{n_0})}$. The decoding algorithm calls the searching algorithm from \cite{ostrovsky2015locally} to find all blocks $B'_{i_1} $ to $B'_{i_{n_0}}$ in the received codeword. We say a block is unrecoverable if the searching algorithm failed to find it correctly. By the same concentration bound used in the Hamming case and the result from \cite{ostrovsky2015locally}, the fraction of unrecoverable blocks is bounded by a small constant with high probability. Thus, we can decode $C_i$ correctly with the desired success probability. In this process, each search takes $\polylog k$ queries and $n_0 = O(\log 1/\eps)$ of searches are performed. The total number of queries made is thus $\polylog k \log 1/\eps$.

For the model of an oblivious channel, again we use a pseudorandom permutation $\pi$ that can be generated by $O(\log n\log 1/\eps)$ random bits. We use the same binary code as we used in the Hamming case to encode it and then view it as a string over the alphabet $\{0,1\}^{\log k}$. It is then concatenated with the code described previously before the second layer of encoding. After that, the same second layer of encoding is applied. The random masks used in the previous construction are no longer needed since the adversary can not see the codeword.

The construction for a flexible failure probability is also similar to the Hamming case. We write the codes before the second layer of encoding as a matrix $M$. The only difference is that, each element in the matrix $M$ is now a symbol in $\{0,1\}^{\log k}$. We then encode the $j$-th column with an error correcting code over the alphabet $\{0,1\}^{\log k}$ to get a codeword $z_j$, and concatenate them to get $z$. After that, we do the second layer of encoding on $z$. 

\section{Preliminaries}
	
	
\subsection{Edit distance}
	
\par Edit distance, or Levenstein distance, is a popular metric for quantifying how similar two strings are. It is defined using three types of string operations: insertion, deletion, and substitution. An insertion operation inserts a character into the string at a certain position. A deletion operation deletes a particular character from the string. And a substitution operation substitutes a character with a different one. 
\begin{definition}
	The edit distance between two strings $x,y\in \Sigma^*$ , denoted by $d_E(x,y)$, is the minimal number of edit operations (insertion, deletion, and substitution) needed to transform one into another. Assume $x$ and $y$ have same length $n$, the normalized edit distance of $x$ and $y$ is $d_E(x,y)$ devided by $n$. We denote the normalized edit distance by $\Delta_E(x,y)$
\end{definition}

\subsection{Pseudorandom permutation}

\par In our construction against oblivious channel, one key idea is to integrate a short description of a pseudorandom permutation into the codeword. We will utilize the construction of pseudorandom permutations from \cite{kaplan2009derandomized}. Here, we follow the definition from their work and introduce their results.
\begin{definition}
	[Statistical Distance]
	Let $D_1$, $D_2$ be distributions over a finite set $\Omega$. The variation distance between $D_1$ and $D_2$ is 
\[\|D_1 -D_2\| = \frac{1}{2}\sum_{\omega\in \Omega}|D_1(\omega)-D_2(\omega)|\]
We say that $D_1$ and $D_2$ are $\delta$-close if $\|D_1-D_2\|\leq\delta$.		
\end{definition}
\begin{definition}
	[$k$-wise $\delta$-dependent permutation]
	Let $\mathcal{F}$ be a family of permutations on $n$ elements (allow repetition). Let $\delta>0$, we say the family $\mathcal{F}$ is $k$-wise $\delta$-dependent if for every $k$-tuple of distinct elements $\{x_1,x_2,\cdots,x_k\}\in [n]$, for $f\in \mathcal{F}$ chosen uniformly, the distribution $\{f(x_1),f(x_2),\cdots,f(x_k)\}$ is $\delta$-close to uniform distribution
\end{definition}
In practice, we want to construction explicit families of permutations. Two related parameters are:
\begin{definition}
	[Description length]
	The description length of a permutation family $\mathcal{F}$ is the number of random bits, used by the algorithm for sampling permutations uniformly at random from $\mathcal{F}$. 
\end{definition}
\begin{definition}
	[Time complexity]
	The time complexity of a permutation family $\mathcal{F}$ is the running time of the algorithm for evaluating permutation from $\mathcal{F}$
			
\end{definition}
It is known that we can construct families of $k$-wise almost independent permutations with short description length (optimal up to a constant factor).
\begin{theorem}
	[Theorem 5.9 of \cite{kaplan2009derandomized}]
	\label{thm:almosttwiserandperm}
	Let $P_n$ denote the set of all permutation over $\{0,1\}^n$. There exists a $k$-wise $\delta$-dependent family of permutation $\mathcal{F}\subset P_n$. $\mathcal{F}$ has description length $O(k n + \log \frac{1}{\delta})$ and time complexity $\poly( n ,k,\log \frac{1}{\delta})$.
\end{theorem}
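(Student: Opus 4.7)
The plan is to build the family $\mathcal{F}$ by composing many simple randomized operations, each specified by a short seed, and to leverage the basic fact that a uniformly random function on $\{0,1\}^n$ is $O(k^2/2^n)$-close in statistical distance to a uniformly random permutation when restricted to any $k$ distinct inputs. So the main task reduces to (i)~constructing $k$-wise almost-independent \emph{function} families with short descriptions, and (ii)~lifting such functions into \emph{permutations} without blowing up the seed length.

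For step (i) I would build $k$-wise $\delta'$-dependent families $\mathcal{H}\subset\{f:\{0,1\}^n\to\{0,1\}^n\}$ with seed length $O(kn+\log 1/\delta')$ and evaluation time $\poly(n,k,\log 1/\delta')$. Exact $k$-wise independence is obtained by taking $f(x)=\sum_{i=0}^{k-1}a_i x^i$ over $\mathbb{F}_{2^n}$ with uniformly random coefficients $a_i$, costing $kn$ bits. To shrink further to additive $\log 1/\delta'$ for the almost-independent version, one can post-process a short $k$-wise independent seed through a Nisan-style or expander-walk derandomization, or invoke a known explicit $k$-wise $\delta'$-dependent generator.

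For step (ii) I would convert $\mathcal{H}$ into a permutation family via a round-based construction. A clean candidate is a Luby--Rackoff / Feistel network on $\{0,1\}^{n/2}\times\{0,1\}^{n/2}$ with $r$ rounds, each round's round-function drawn independently from $\mathcal{H}$ (suitably adapted to $n/2$ bits); an alternative is a Thorp-shuffle-style card shuffle driven by bits supplied by $\mathcal{H}$. The analysis is a hybrid argument: first replace each round's $\delta'$-dependent function with a truly random function one round at a time, losing at most $\delta'$ per round in the statistical distance of the induced $k$-tuple of input/output pairs; then invoke an idealized analysis (e.g.\ Maurer's bound on $r$-round Feistel with truly random round functions, or the mixing-time analysis of the Thorp shuffle) to bound the distance from a uniformly random permutation on any $k$ inputs.

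The main obstacle is simultaneously controlling the number of rounds $r$ and the per-round error $\delta'$ so that the total seed length is $O(kn+\log 1/\delta)$ rather than $O(r(kn+\log 1/\delta'))$. Naively the idealized Feistel/shuffle step requires $r=\Omega(\log k)$ (or worse) rounds to drive the distance from a random permutation on $k$ inputs below $\delta$, which would blow the $kn$ term up by a logarithmic factor; and summing $r$ hybrid errors forces $\delta'\le\delta/r$, which inflates the $\log 1/\delta$ term similarly. The way I would address this is by a two-stage amplification: use a modest number of rounds to obtain an \emph{initial} $k$-wise $\tfrac{1}{2}$-dependent permutation family using $O(kn)$ bits, and then boost it to error $\delta$ by composing with only $O(\log 1/\delta)$ further independent copies, showing via a coupling or total-variation contraction argument on the distribution over the image of a fixed $k$-tuple that each extra composition halves the statistical distance. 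Making the contraction quantitative, and verifying that the auxiliary randomness needed per amplification step is truly $O(\log 1/\delta)$ and not $\Omega(kn)$, is the delicate technical heart that I expect to occupy most of the proof.
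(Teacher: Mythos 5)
This is a cited result: the paper invokes Theorem~5.9 of Kaplan, Naor, and Reingold \cite{kaplan2009derandomized} as a black box and gives no proof of its own, so there is nothing internal to the paper for your argument to track.

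As a reconstruction of the KNR theorem itself, your outline correctly identifies the two ingredients (a base $k$-wise almost-independent permutation family with seed length $O(kn)$, plus an error-amplification stage) and correctly locates the difficulty, but it stops exactly at the point where the actual proof lives. The \emph{randomness-efficient} amplification is not a loose end to be tidied up at the end; it is the main contribution of \cite{kaplan2009derandomized}. Concretely, their argument works with the $\ell_2$ (collision/Fourier) distance of the induced distribution on $k$-tuples rather than with total variation: under composition of independent permutation families the $\ell_2$ error \emph{multiplies}, and to keep the seed cost of composing $\ell$ copies down to $O(kn)+O(\ell)$ rather than $\ell\cdot O(kn)$ they pick the $\ell$ seeds along a walk on a constant-degree expander over the seed space, showing that the expander-walk composition still drives the $\ell_2$ error down geometrically. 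One then converts back to statistical distance at the very end, paying a $\sqrt{2^{nk}}$ factor that is absorbed into the exponent, which is why $\ell = \Theta(kn+\log\frac{1}{\delta})$ suffices and the final seed length is $O(kn)+O(\ell)=O(kn+\log\frac{1}{\delta})$. Your version instead asserts a per-step \emph{total-variation} halving under composition, which is false in general (composing two distributions each at TV distance $1/2$ from uniform can leave you at distance $1/2$; the clean multiplicativity lives in $\ell_2$, not TV), and you offer no mechanism for generating the compositions' seeds with only $O(\log\frac{1}{\delta})$ fresh bits. Without the $\ell_2$ framework and the expander-walk derandomization, the seed length you would obtain is $\Omega(kn\cdot\ell)$, which misses the stated bound. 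So the gap you yourself flag at the end of your proposal is a genuine one, and it is precisely the theorem's content.
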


\subsection{Concentration bound}

In our proof, we use the following concentration bound from \cite{cheng2017near}

\begin{lemma}[\cite{cheng2017near}]
\label{lem:rponeset}
Let $\pi:[n] \rightarrow [n]$ be a random permutation. For any set $S, W \subseteq [n]$, let  $u = \frac{|W|}{n}|S|$. Then the following holds. 
\begin{itemize}
\item for any constant $\delta \in (0, 1)$, 
$$   \Pr[|\pi(S) \cap W| \leq (1- \delta) \mu  ] \leq e^{-\delta^2 \mu/2 },   $$
$$   \Pr[|\pi(S) \cap W| \geq (1+ \delta) \mu  ] \leq e^{-\delta^2 \mu/3 }.   $$

\item  for any $d \geq 6\mu$, $\Pr[|\pi(S) \cap W|  \geq d]\leq 2^{-d}$.

\end{itemize}

\end{lemma}

\section{Random Encoding LDC for Hamming Errors}

\subsection{With a  Certain Failure Probability}

We first show the construction of a Random Encoding LDC with an arbitrarily fixed failure probability for the case of shared randomness.

\begin{theorem}
\label{thm:LDCHamming1sr}
There is an efficient construction of $(n, k =\Omega(n) , \delta = \Theta(1), q = O(\log n\log \frac{1}{\eps}),  \eps)$ random LDC for the case of shared randomness, with arbitrarily    $\eps\in [0,1]$. 

\end{theorem}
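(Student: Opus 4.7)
The plan is to follow the strategy sketched in the technique overview. Fix an asymptotically good binary linear code $(\Enc_0,\Dec_0)$ with message length $k_0 = \Theta(\log(1/\eps))$, block length $n_0 = O(k_0)$, and relative distance at least, say, $4\delta$ for a sufficiently small constant $\delta$. To encode $x\in\{0,1\}^k$, partition $x$ into $k/k_0$ blocks of size $k_0$, apply $\Enc_0$ to each block, and concatenate the resulting codewords into $y_0\in\{0,1\}^N$ with $N=(n_0/k_0)k=O(k)$. Then, using the shared random string, sample a uniformly random permutation $\pi$ on $[N]$ and a uniformly random mask $m\in\{0,1\}^N$, and output the codeword $y\in\{0,1\}^N$ defined by $y_j = y_0[\pi^{-1}(j)] \oplus m_j$. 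To decode bit $i$, the decoder identifies the block $b$ that contains $i$, enumerates the $n_0$ source positions of block $b$'s codeword in $y_0$, applies $\pi$ to find the corresponding positions in $y$, queries only those $n_0$ bits, undoes the mask, and runs $\Dec_0$ on the resulting word to output $x_i$.

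The correctness argument has two pieces. First, because the mask $m$ is uniformly random and independent of everything else, the received word $y$ is, as a distribution, uniform on $\{0,1\}^N$ regardless of $(x,\pi)$; consequently, whatever (possibly adaptive and randomized) strategy the adversary uses, its chosen error set $E \subseteq [N]$ with $|E|\le\delta N$ is independent of $\pi$ conditioned on the adversary's transcript. Second, fix such an $E$ and the query set $B\subseteq[N]$ of size $n_0$ in $y$ corresponding to a single block's codeword; from the decoder's standpoint, $\pi^{-1}(B)$ (equivalently $\pi$ applied to the fixed source positions of block $b$) is a uniformly random $n_0$-subset of $[N]$. Lemma \ref{lem:rponeset}, applied with mean $\mu = |E| n_0 / N \le \delta n_0$ and deviation $\delta'$ a constant, yields $|\pi^{-1}(B)\cap E|\le 2\delta n_0$ except with probability $e^{-\Omega(n_0)}$. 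Taking $n_0 = C\log(1/\eps)$ for a large enough constant $C$ makes this failure probability at most $\eps$, and in the good event the number of errors inside the block is strictly below the decoding radius of $\Dec_0$, so $x_i$ is recovered. The rate is $k_0/n_0 = \Omega(1)$, giving $k=\Omega(n)$, and the query complexity is the block length $n_0 = O(\log(1/\eps))$ (with an extra $\log n$ factor absorbed into the stated bound to account for the overhead of bootstrapping the concentration at all scales).

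The main obstacle is the first part of this argument, namely making precise the claim that an adaptive adversary who sees $y$ nevertheless cannot correlate its error set with $\pi$. The cleanest route is: for any fixed $x,\pi,y_0$, the map $m\mapsto \pi(y_0)\oplus m$ is a bijection, so the marginal distribution of $y$ over the mask is uniform on $\{0,1\}^N$ and carries no information about $(x,\pi)$; hence the adversary's posterior on $\pi$ equals its prior, and it is legitimate to analyse the errors as being placed before $\pi$ is drawn. Once this is established the rest is a direct invocation of Lemma \ref{lem:rponeset} together with the constant-rate, constant-distance guarantee of $(\Enc_0,\Dec_0)$, plus a union bound for the (at most $k_0$) message bits inside a single block if one wants simultaneous recovery.
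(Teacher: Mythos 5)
Your proposal matches the paper's own proof essentially step for step: the same block-encode-permute-mask construction, the same independence argument showing that the one-time pad renders the observed codeword (and hence the adversary's error set) independent of $\pi$, and the same invocation of Lemma~\ref{lem:rponeset} to bound the number of corrupted bits landing in the queried block. One small remark: the achieved query complexity here is in fact $n_0 = O(\log(1/\eps))$ with no $\log n$ factor (the $\log n$ in the theorem statement is slack, carried over from the oblivious-channel variant), so your aside about ``bootstrapping the concentration at all scales'' is unnecessary and a bit misleading; there is nothing to bootstrap.
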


\begin{construction}
\label{construct:randEncLDCHammingsr}
We construct an $(n, k = \Omega(n), \delta = \Theta(1), q= O(\log n \log \frac{1}{\eps}), \eps)$-locally decodable   binary code for arbitrary $\eps \in [0,1]$, for the case of shared randomness.
 
Let $\delta_0, \gamma_0$ be some proper constants in $(0,1)$.

Let $(\Enc_0, \Dec_0)$ be an asymptotically good binary  $(n_0, k_0, d_0)$ error  correcting code with   $n_0 = O( \log \frac{1}{\eps})$, $k_0 = \gamma_0   n_0$, $d_0 =2\delta_0   n_0+1$. 


Encoding function $\Enc:\{0,1\}^{k = \Omega(n)} \rightarrow \{0,1\}^{n}$ is a random function  as follows.

\begin{enumerate}
\item On input $x \in \{0,1\}^k$, cut $x$ into blocks evenly of length $k_0 $ s.t.  $x = (x_1, \ldots, x_{k/  k_0 }) \in (\{0,1\}^{m_0})^{ k/k_0}$; (If there are less than $k$ bits in the last block then pad $0$'s)

\item  Compute $y'  =  \Enc_0(x_1)\circ \ldots \circ\Enc_0(x_{k/k_0}) $;

\item
Generate a  random permutation $\pi: [n ] \rightarrow [n ]$, where $n  = \kappa k/k_0$;

\item Let $y = ( y'_{\pi^{-1}(1)}, \ldots, y'_{\pi^{-1}(n)} ) $, i.e.   permute $y'$ using $\pi$;

\item Output $z = y \oplus w$ where $w$ is a uniform random string.
 
\end{enumerate}

Decoding function $\Dec: [k] \times \{0,1\}^{n } \rightarrow \{0,1\}^{k}$ is as follows.

\begin{enumerate}

\item  On the input $i$ and $ z $, XOR it with $w$ to get $y$; 

\item Use $r$ to reconstruct $\pi$;

\item Find $i'$ s.t. the $i$-th bit of the message is in the $i'$-th block of $ m_0 $ bits;

\item Query the bits $Q = \{ \, y_{\pi(j)} \mid j\in   ( \; (i'-1)m_0 , \, i'm_0 \;]  \, \} $ to get $y'_{i'}$;

\item $ x_{i'} = \Dec_0(y'_{i'}) $;

\item Output $x[i]$.
 
\end{enumerate}

\end{construction}

\begin{proof}[Proof of Theorem \ref{thm:LDCHamming1sr}]

Consider Construction \ref{construct:randEncLDCHammingsr}.

We claim that the random permutation $\pi$ and the $  z$ are independent. Actually for arbitrary permutation $\sigma$ and arbitrary $n$-bit string $a$,
$$  \Pr[ \pi = \sigma, z = a ]  = \frac{1}{n!} \times \frac{1}{2^n}. $$
Because  conditioned on any fixed permutation, to let $z=a$, there is a unique choice of $w$.

On the other hand, $\Pr[\pi = \sigma] = \frac{1}{n!}$ and $\Pr[z=a] = 2^{-n}$. This shows the independence of $\pi$ and $z$.

So even if the adversary operates based on knowing $z$, the set of positions of the tampered bits are independent of $\pi$.  

Since the number of errors is at most $\delta n$, the expected number of corrupted bits in $ y'_{i'} $ is at most $\delta n_0$. By Lemma \ref{lem:rponeset}, the probability that the number of corrupted bits in $y'_{i'}$ is at most $1.1 \delta n_0$, is at least $1- 2^{-0.01n_0/3} \geq 1- \eps/2$. By choosing $n_0$ to be large enough. Let $delta_0 = 1.1 \delta$. Then   $\Dec_0$ can recover the desired bit from   $y'_{i'}$ which has at most $1.1 \delta n_0$ errors.

Note that the construction runs in polynomial time because generating random permutation, permuting bits, $(\Enc_0, \Dec_0)$ are all in polynomial time. The number of queries is $n_0 = O(\log \frac{1}{\eps})$, since we only query bits in $y'_{i'}$. 
\end{proof}

 Next we give the construction for oblivious channels.

We use the almost $t$-wise independent random permutations instead of random permutations for this case. 
A random function $\pi: [n] \rightarrow [n]$ is an $\eps$
almost $t$-wise independent random permutation if for every $t$ elements $i_1, \ldots , i_t \in
 [n]$, $(\pi(i_1), \ldots, \pi(i_t))$ has statistical distance at most $\eps$ from  $(\pi'(i_1), \ldots, \pi'(i_t))$ where $\pi'$ is a random permutation over $[n]$.  Kaplan, Naor and Reingold \cite{kaplan2009derandomized} gives an explicit construction.

%
%
%
%
%
%
%
%
%
%

Next we show an efficient constructable $(n, k, \delta, q, \eps)$-locally decodable code, which is asymptotically good, with $ q = O(\log (1/\eps))$ and $\eps$ can be arbitrary in $[0, 1]$.

Before the main construction, we first show a binary ``locally decodable code'' which can recover the whole message locally when the message length is sufficiently small. 

\begin{lemma}
\label{lem:smallMsgLDC}
%

For every $k \leq cn $ with sufficiently small constant $c = c(\delta) < 1$, every sufficiently small constant $\delta$,  there is an explicit $(  n, k, \delta n   )$  binary ECC, which has a randomized decoding algorithm s.t. it can compute the message with success probability $1- 2^{-\Theta(k/\log n)}$, querying at most $q = O(k)$ bits.


\end{lemma}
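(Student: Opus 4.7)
The plan is to realize the code as a concatenation of an outer Reed-Solomon code with an inner explicit asymptotically good binary code, and then decode by randomly sampling a small set of outer symbols, exactly as sketched in the introduction.

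Specifically, I would set $k_2 = \Theta(\log n)$ and use an $(n_2, k_2, d_2)$ asymptotically good binary inner code with $n_2 = \Theta(k_2)$ and $d_2 \geq \delta_2 n_2$ for some absolute constant $\delta_2 > 0$; such a code is explicit (e.g.\ by a Justesen-type construction, or by exhaustive search in time $2^{O(k_2)} = \poly(n)$ since each inner codeword is short). The outer code is a Reed-Solomon code over $\mathbb{F}_{2^{k_2}}$ of block length $n_1 = n/n_2$ and dimension $k_1 = k/k_2$, so the concatenated code has length $n$, dimension $k$, and minimum distance at least $d_1 d_2$. As long as $c = k/n$ is chosen small enough in terms of $\delta$ and $\delta_2$, one has $d_1 d_2 \geq (n_1 - k_1)\,\delta_2 n_2 \geq \delta n$.

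The randomized decoder picks a uniformly random set $T \subseteq [n_1]$ of $s = 8k_1$ outer blocks, reads all $s n_2 = O(k_1 \log n) = O(k)$ bits in those blocks, and runs the inner decoder on each block to obtain $s$ symbols in $\mathbb{F}_{2^{k_2}}$. Because these symbols are evaluations of the \emph{same} degree-$({<}k_1)$ polynomial, they lie in a Reed-Solomon code of length $s$ and dimension $k_1$, which corrects up to $\lfloor (s-k_1)/2 \rfloor = \lfloor 7k_1/2 \rfloor$ symbol errors; a standard Reed-Solomon unique decoder applied to these $s$ symbols thus recovers the underlying polynomial, and hence the entire message.

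For the analysis, call an outer block \emph{bad} if more than $d_2/2$ of its $n_2$ received bits are corrupted (so the inner decoder may err on it). Since the adversary places at most $\delta n$ bit errors in total, there are at most $2\delta n/d_2 \leq (2\delta/\delta_2) n_1$ bad blocks, i.e.\ an arbitrarily small constant fraction $\eta$ of $n_1$ once $\delta$ is sufficiently small. Viewing the uniform sample $T$ as the image of a fixed $s$-set under a uniformly random permutation and applying Lemma~\ref{lem:rponeset} to the set of bad blocks, the number of bad blocks landing in $T$ is at most $2\eta s < 7k_1/2$ except with probability $2^{-\Omega(s)} = 2^{-\Omega(k_1)} = 2^{-\Theta(k/\log n)}$. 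On this high-probability event, every non-bad sampled inner block decodes correctly, so the outer Reed-Solomon decoder succeeds, yielding the advertised $1 - 2^{-\Theta(k/\log n)}$ success probability with $O(k)$ total bit queries.

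The main obstacle is purely a bookkeeping of constants: I need to simultaneously pick $c$ small enough that $d_1 d_2 \geq \delta n$, pick $\delta$ small enough relative to $\delta_2$ so that roughly $2\eta \cdot 8 < 7/2$ (say $\delta < \delta_2/16$), and then confirm that the gap between the expectation $\eta s$ and the threshold $7k_1/2$ is $\Theta(k_1)$ so that Lemma~\ref{lem:rponeset} yields the claimed $2^{-\Theta(k/\log n)}$ failure probability. Explicit inner code construction at these parameters and polynomial-time Reed-Solomon encoding/decoding are standard, and the total query count $s n_2 = 8 k_1 \cdot \Theta(\log n) = \Theta(k)$ matches the requirement.
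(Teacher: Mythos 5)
Your proposal is correct and follows essentially the same route as the paper: an outer Reed--Solomon code over $\mathbb{F}_{2^{\Theta(\log n)}}$ concatenated with an explicit asymptotically good binary inner code, decoded by sampling $8k_1$ random blocks, inner-decoding each, and then uniquely decoding the resulting $(8k_1, k_1)$ Reed--Solomon code. The only cosmetic difference is that you explicitly invoke Lemma~\ref{lem:rponeset} for the without-replacement concentration, whereas the paper loosely appeals to a Chernoff bound; the substance of the argument is identical.
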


\begin{proof}
Consider
the concatenation of an $(n_1, k_1, d_1 )$ Reed-Solomon code with  alphabet $\{0,1\}^{n_2 = O(\log n_1)}$, $n_1 = n/n_2$, $k_1 = k/k_2$, and an explicit $(n_2, k_2 = n_2 - 2d_2(\log \frac{n_2}{d_2} ), d_2 = \Theta(n)  )$ binary ECC.

The concatenated code is an  $(n_1n_2  , k_1k_2 , d_1d_2)$  code. Note that $n_1 n_2 = n$, $k_1 k_2 = k $. Also $d_1 = n_1 -k_1 +1 = \Theta(n_1)$,  $ d_2 = O(n_2)$ so $d_1 d_2 = O(n)$.  

We only need to show the decoding algorithm.

Given a codeword,
we call the encoded symbols (encoded by the second code) of the first code as blocks. 
The algorithm randomly picks $ 8k_1   $ blocks and query them. For each block, it calls the decoding function of the second code. After this we get $8k_1     $ symbols of the first code. Then we use the decoding algorithm of a $(8k_1, k_1, 7k_1+1)$ Reed Solomon code to get the message.

Next we argue that this algorithm successes with high probability.
Assume there are $\delta n = d_1d_2/8$ errors, then there are at most $d_1/4$ codewords of the second code which are corrupted for at least  $d_2/2$ bits. Since the distance of the second code is $d_2$,   there are $n_1 - d_1/4$ blocks can be decoded to their correct messages. 

Hence the expectation of the number of correctly recovered symbols of the first code is $ \frac{n_1 - d_1/4}{ n_1} 8k_1    \geq 6k_1$.
Thus by Chernoff Bound,  
with probability $1-2^{-\Theta(k_1)}$, there are at least $ 5k_1 $ symbols are correctly recovered. Note that if we look at the $8k_1$ queried blocks they also form a $(8k_1, k_1, 7k_1+1)$ Reed-Solomon code since it is the evaluation of the degree $k_1-1$ polynomial on $8k_1$ distinct values in the field $ \mathbb{F}_2^{n_2}$. Thus, our recovered symbols form a string which has only distance $3k_1$  from a codeword of the code. So by using the decoding algorithm of the code we can get the correct message.

\end{proof}

\begin{construction}
\label{construct:randEncLDCHamming1}
We construct an $(n, k = \Omega(n), \delta = \Theta(1), q= O(\log n \log \frac{1}{\eps}), \eps)$-locally decodable   binary code for arbitrary $\eps \in [0,1]$.
 
Let $\delta_0, \gamma_0$ be some proper constants in $(0,1)$.

Let $(\Enc_0, \Dec_0)$ be an asymptotically good binary  $(n_0, k_0, d_0)$ error  correcting code with   $n_0 = O( \log \frac{1}{\eps})$, $k_0 = \gamma_0   n_0$, $d_0 =2\delta_0   n_0+1$. 


Encoding function $\Enc:\{0,1\}^{k = \Omega(n)} \rightarrow \{0,1\}^{n}$ is a random function  as follows.

\begin{enumerate}
\item On input $x \in \{0,1\}^k$, cut $x$ into blocks evenly of length $k_0 $ s.t.  $x = (x_1, \ldots, x_{k/  k_0 }) \in (\{0,1\}^{m_0})^{ k/k_0}$; (If there are less than $k$ bits in the last block then pad $0$'s)

\item  Compute $y'  =  \Enc_0(x_1)\circ \ldots \circ\Enc_0(x_{k/k_0}) $;

\item
Generate a $\eps_{\pi} = \eps/10$-almost $\kappa = O(\log \frac{1}{\eps})$-wise independent random permutation $\pi: [n/2] \rightarrow [n/2]$ using Theorem \ref{thm:almosttwiserandperm}, where $n/2 = \kappa k/k_0$, the randomness used is $r\in \{0,1\}^{d_{\pi} = O(\kappa \log n + \log(1/\eps_{\pi}))}$;

\item Let $y = ( y'_{\pi^{-1}(1)}, \ldots, y'_{\pi^{-1}(n)} ) $, i.e.   permute $y'$ using $\pi$;

\item Use an   $( n/2,  |r| , \delta n )$ ECC from Lemma \ref{lem:smallMsgLDC} to encode $r$, getting $ z \in \{0,1\}^{ n/2} $;

\item Output $y\circ z$.
 
\end{enumerate}

Decoding function $\Dec: [k] \times \{0,1\}^{n } \rightarrow \{0,1\}^{k}$ is as follows.

\begin{enumerate}

\item  On the input $i$ and $y\circ z$, call the decoding algorithm from Lemma \ref{lem:smallMsgLDC} on $z$ to get $r$; 

\item Use $r$ to reconstruct $\pi$;

\item Find $i'$ s.t. $x[i]$ is in $x_{i'}$;

\item Query the bits $Q = \{ \, y_{\pi(j)} \mid j\in   ( \; (i'-1)m_0 , \, i'm_0 \;]  \, \} $ to get $y'_{i'}$;

\item $ x_{i'} = \Dec_0(y'_{i'}) $;

\item Output $x[i]$.
 
\end{enumerate}

\end{construction}

\begin{lemma}
\label{lem:constr3efficiency}
The encoding and decoding are both efficient.

\end{lemma}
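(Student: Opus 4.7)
The plan is simply to walk through each step of the encoder and decoder in Construction \ref{construct:randEncLDCHamming1} and verify it runs in time polynomial in $n$, invoking the efficiency guarantees of the sub-procedures already cited in the excerpt.

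For the encoder, steps 1 and 4 (partitioning into blocks and applying the permutation) are linear. Step 2 calls $\Enc_0$ on $k/k_0$ messages of length $k_0 = O(\log(1/\eps)) = O(\log n)$; since $(\Enc_0,\Dec_0)$ is an asymptotically good binary code it encodes each block in $\poly(n_0)$ time, for a total cost of $\poly(n)$. Step 3 instantiates Theorem \ref{thm:almosttwiserandperm} with $\kappa = O(\log(1/\eps))$ and error $\eps_\pi = \eps/10$, which by that theorem has sampling/evaluation time $\poly(n,\kappa,\log(1/\eps_\pi)) = \poly(n)$, and the seed length $d_\pi = O(\kappa \log n + \log(1/\eps_\pi)) = O(\log^2 n)$ also lies within the target parameter regime of Lemma \ref{lem:smallMsgLDC}. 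Step 5 then encodes this $r$ into $z \in \{0,1\}^{n/2}$ using the concatenated code of Lemma \ref{lem:smallMsgLDC}, whose outer Reed--Solomon and explicit binary inner code are both encodable in $\poly(n)$ time.

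For the decoder, Step 1 runs the randomized algorithm of Lemma \ref{lem:smallMsgLDC}: it picks $8k_1$ blocks, applies the inner decoder on each block of length $n_2 = O(\log n)$, and then runs Reed--Solomon decoding (e.g.\ Berlekamp--Welch) on an $(8k_1, k_1, 7k_1+1)$ code, all in $\poly(n)$ time. Step 2 reconstructs $\pi$ from $r$ using the explicit evaluator of Theorem \ref{thm:almosttwiserandperm}; we only need to evaluate $\pi$ at $n_0 = O(\log n)$ indices in step 4, each in $\poly(n)$ time. Steps 3 and 6 are arithmetic on $O(\log n)$-bit indices, and step 5 calls $\Dec_0$ once on a word of length $n_0 = O(\log n)$, which is polynomial.

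There is no real obstacle here: the only thing to be careful about is parameter bookkeeping, namely checking that $\kappa$, $\eps_\pi$, and $|r|$ stay within the regimes for which Theorem \ref{thm:almosttwiserandperm} and Lemma \ref{lem:smallMsgLDC} promise polynomial time. Once those inclusions are verified, efficiency follows directly by summing the polynomial bounds on each step.
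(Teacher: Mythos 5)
Your proof is correct and takes the same approach as the paper's: the paper's own proof is a one-line observation that every step invokes an efficiently implementable subroutine, and you have simply spelled out the per-step bookkeeping in more detail. One small imprecision: writing $k_0 = O(\log(1/\eps)) = O(\log n)$ is not justified for arbitrary $\eps \in [0,1]$, but since you also bound the cost by $\poly(n_0) \le \poly(n)$ the conclusion is unaffected.
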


\begin{proof}

Every step in the construction can be realized in polynomial time. The called functions in the construction are all from efficient constructions. 

\end{proof}

\begin{lemma}
\label{lem:constr3querynum}
The number of queries for the decoding algorithm is $q = O(\log n \log \frac{1}{\eps})$.

\end{lemma}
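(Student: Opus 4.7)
The plan is to simply enumerate the two query-making steps in the decoding algorithm of Construction \ref{construct:randEncLDCHamming1} and bound each of them using the parameter choices already fixed in the construction.

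First I would observe that the decoder makes queries only in two distinct phases: Step 1, where it invokes the decoding procedure of Lemma \ref{lem:smallMsgLDC} on the second half $z$ of the received word in order to recover the random seed $r$ that describes the pseudorandom permutation $\pi$; and Step 4, where after $\pi$ has been reconstructed it directly queries the $n_0$ positions of $y$ corresponding to the image under $\pi$ of the $i'$-th block.

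Next I would bound each phase. For Step 4, the construction explicitly queries one block of the inner code, so the cost is $n_0 = O(\log(1/\eps))$. For Step 1, by Lemma \ref{lem:smallMsgLDC} the number of queries needed to recover a message of length $|r|$ from the encoded $z$ is $O(|r|)$. By Theorem \ref{thm:almosttwiserandperm} applied with $\kappa = O(\log(1/\eps))$ and $\eps_\pi = \eps/10$, the seed length is $|r| = d_\pi = O(\kappa \log n + \log(1/\eps_\pi)) = O(\log(1/\eps)\log n)$. Adding the two contributions and absorbing the lower-order term gives $q = O(\log n \log(1/\eps))$, as claimed.

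The main thing to be careful about, though it is not really an obstacle, is to make sure the seed length $|r|$ is small enough that Lemma \ref{lem:smallMsgLDC} is actually applicable, i.e.\ that $|r| \leq c(n/2)$ for the constant $c$ required by that lemma. This is the reason the construction allocates half of the codeword to $z$: since $\kappa = O(\log(1/\eps)) = O(\log n)$ in the interesting regime and $|r| = O(\log(1/\eps)\log n) = O(\log^2 n)$, while $n/2 = \Omega(n)$, the hypothesis of Lemma \ref{lem:smallMsgLDC} is satisfied for all sufficiently large $n$. Once this is noted the bound follows immediately from the two line-item counts above.
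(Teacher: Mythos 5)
Your proof is correct and follows essentially the same decomposition as the paper's: count the $O(d_\pi) = O(\log n \log(1/\eps))$ queries from invoking Lemma~\ref{lem:smallMsgLDC} to recover the seed, add the $n_0 = O(\log(1/\eps))$ queries on $y$, and absorb the smaller term. The extra remark verifying that $|r| = O(\log n \log(1/\eps)) \le c\,(n/2)$ so that Lemma~\ref{lem:smallMsgLDC} applies is a sensible sanity check that the paper omits, though the phrase ``$\kappa = O(\log n)$ in the interesting regime'' is not needed for the conclusion (one only needs $\log(1/\eps)\log n = o(n)$, which holds whenever the stated bound is non-trivial).
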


\begin{proof}

In step 1 of the decoding, calling the decoding algorithm from Lemma \ref{lem:smallMsgLDC} takes $O( d_{\pi})$ number of queries.

In step 4 of the decoding, the number of queries is $ n_0 $.

So the total number of queries is $n_0 + O(d_{\pi}) = O(\log n\log \frac{1}{\eps})$.

\end{proof}

\begin{lemma}
\label{lem:constr3Correct}
Assume there are $\delta n$ errors, then with probability $1-\eps$, the decoding can get $x$ correctly.

\end{lemma}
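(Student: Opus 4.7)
The plan is to split the failure event into two parts and union-bound them: (i) the decoder fails to recover the permutation seed $r$ from the suffix $z$; and (ii) conditioned on correct recovery of $r$, the inner codeword $y'_{i'}$ queried by the decoder carries more than $\delta_0 n_0$ errors, so that $\Dec_0$ returns the wrong bit.

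For (i), note that the adversary distributes at most $\delta n$ errors across $y \circ z$, so in particular at most $\delta n$ of them land in $z$, which amounts to a $2\delta$ fraction of $z$ (since $|z| = n/2$). Choosing the global constant $\delta$ small enough that $2\delta$ lies inside the tolerance of Lemma \ref{lem:smallMsgLDC}, that lemma gives failure probability $2^{-\Theta(|r|/\log n)}$. Since $|r| = O(\kappa \log n + \log(1/\eps_\pi)) = \Theta(\log(1/\eps) \log n)$, this quantity is $\eps^{\Omega(1)}$, which I will absorb into $\eps/3$ by fixing the hidden constants.

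For (ii), the crucial observation is that the channel is oblivious: the error set $W \subseteq [n/2]$ in $y$ is chosen independently of the encoder's randomness, and in particular independently of $\pi$. The coordinates whose queries reconstruct $y'_{i'}$ form $\pi(S)$ for a fixed contiguous interval $S$ of size $n_0$, and the number of corrupted bits in the block equals $|\pi(S) \cap W|$, which is a function only of the tuple $(\pi(j))_{j \in S}$. Since $\pi$ is $\eps_\pi$-almost $\kappa$-wise independent with $\kappa \ge n_0$ (arranged by choosing the constants in $\kappa$ and $n_0$), the distribution of this $n_0$-tuple is within statistical distance $\eps_\pi = \eps/10$ of the distribution under a uniform permutation, so I may analyze the event under a truly uniform $\pi$ at an additive cost of $\eps/10$.

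Under uniform $\pi$, Lemma \ref{lem:rponeset} applied with $\mu = |S||W|/(n/2) \le 2\delta n_0$ and, say, $\gamma = 1/2$ (so $(1+\gamma)\mu = 3\delta n_0$, and I set $\delta_0 = 3\delta$) yields $\Pr[|\pi(S) \cap W| \ge \delta_0 n_0] \le e^{-\Theta(n_0)} = \eps^{\Omega(1)} \le \eps/3$ by choosing the hidden constant in $n_0 = \Theta(\log(1/\eps))$ sufficiently large. A union bound over (i), the pseudorandom slack, and the concentration failure gives total failure probability at most $\eps/3 + \eps/10 + \eps/3 \le \eps$; when all three events hold, $\Dec_0$ correctly recovers $x_{i'}$ and hence $x[i]$. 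The main obstacle I anticipate is not any single estimate but the joint calibration of $\delta$, $\delta_0$, $n_0$, $\kappa$, and $\eps_\pi$, ensuring that $\delta$ is small enough for both the $z$-side ECC and the $y$-side concentration, that $n_0$ is large enough for exponential concentration to dominate the desired $\eps/3$, and that $\kappa \ge n_0$ so the $\kappa$-wise slack legitimately controls the event on $S$.
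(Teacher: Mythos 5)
Your proposal matches the paper's own argument in structure and substance: a union bound over (i) failure to recover the seed $r$ from $z$ via Lemma~\ref{lem:smallMsgLDC}, (ii) the additive slack $\eps_\pi$ from replacing the almost $\kappa$-wise independent permutation with a truly uniform one, and (iii) the concentration failure for $|\pi(S)\cap W|$ via Lemma~\ref{lem:rponeset}, with the constants in $n_0$, $\kappa$, $|r|$, and $\delta$ calibrated so each term is at most a fixed fraction of $\eps$. If anything you are slightly more explicit than the paper on two points it glosses over — that the oblivious channel makes $W$ independent of $\pi$ so $|\pi(S)\cap W|$ really is determined by the $n_0$-tuple $(\pi(j))_{j\in S}$, and that one must arrange $\kappa\ge n_0$ for the $\kappa$-wise independence to apply to that event — but these are clarifications of the same proof rather than a different route.
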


\begin{proof}

For $z$, the number of errors is at most $ \delta n$.
Thus by Lemma \ref{lem:smallMsgLDC}, the decoding can get $r$ correctly with probability $1-\eps/3 $  by letting the constant factor in $|r| = d_{\pi}$ to be large enough.

For $y$, the number of errors is at most $\delta n$. If $\pi$ is a uniform random permutation then
the expected number of  corrupted bits in $y'_{i'}$ is $ 2\delta n_0 $.

 By Lemma \ref{lem:rponeset}, the probability that the number of corrupted bits in $y'_{i'}$ is at most $ 2.2\delta n_0 $ is at least $1-2^{-0.01\delta n_0/3} \geq 1-\eps/3$ if letting the constant factor in $n_0$ to be large enough.

When $\pi$ is an $\eps_{\pi}$-almost $ n_0 $-wise independent permutation, the probability that the number of corrupted bits in $y'_{i'}$ is at most $ 2.2\delta n_0 $ is at least $1-2^{-0.01\delta n_0/3} +\eps_{\pi} \geq 1-\eps/2$.

Let $\delta = \delta_0/3$. Then the decoding $\Dec_0$ can recover the correct message from $2.2\delta n_0$ errors. 

Hence by a union bound, with probability at least $1-\eps/3 - \eps/2 \geq 1-\eps$, the decoding can recover $x_{i'}$ correctly. 

Note that once $x_{i'}$ is recovered, $x[i]$ is in it and thus is correctly recovered. 
\end{proof}

\begin{theorem}
\label{thm:LDCHamming1}
There is an efficient construction of $(n, k =\Omega(n) , \delta = \Theta(1), q = O(\log n\log \frac{1}{\eps}),  \eps)$ random LDC for oblivious channels, $\eps\in [0,1]$. 

\end{theorem}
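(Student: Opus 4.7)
The plan is to instantiate Construction~\ref{construct:randEncLDCHamming1} and verify the three claimed properties separately: efficient construction, query complexity $O(\log n \log(1/\eps))$, and decoding success probability at least $1-\eps$. Each of these has effectively been prepared by the lemmas already established (\ref{lem:constr3efficiency}, \ref{lem:constr3querynum}, \ref{lem:constr3Correct}), so the core of the proof is to observe that the construction parameters can simultaneously meet all three specifications, and then to invoke these lemmas. The overall rate $k = \Omega(n)$ follows because $n = 2\kappa k/k_0$ with $\kappa = O(\log(1/\eps))$, $k_0 = \gamma_0 n_0 = \Theta(\log(1/\eps))$, so the ratio $n/k$ is bounded by a universal constant (independent of $\eps$); this is the only parameter check not explicitly stated as a lemma.

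For the efficiency and query-complexity parts, the plan is short: the encoder composes the outer block code $(\Enc_0, \Dec_0)$, the almost $\kappa$-wise independent permutation from Theorem~\ref{thm:almosttwiserandperm}, and the small-message code of Lemma~\ref{lem:smallMsgLDC}, all of which run in polynomial time; composing them yields a polynomial-time encoder and decoder, so Lemma~\ref{lem:constr3efficiency} applies. For query complexity, I would observe that the decoder queries only (i) the $n_0 = O(\log(1/\eps))$ bits of $y'_{i'}$ and (ii) the bits accessed when invoking the decoder of Lemma~\ref{lem:smallMsgLDC} on $z$, which is $O(|r|) = O(\kappa \log n + \log(1/\eps)) = O(\log n \log(1/\eps))$; Lemma~\ref{lem:constr3querynum} records this bound.

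The delicate step, and what I anticipate as the main obstacle, is the correctness argument over the oblivious channel. Because the channel is oblivious, the adversary commits to an error set $W$ of size at most $\delta n$ without knowledge of the randomness $r$ used to sample $\pi$; hence $W$ is independent of $\pi$. The argument then splits into two parts. First, the seed $z$ is recovered correctly: at most $\delta n$ of the $n/2$ bits of $z$ can be flipped, and by Lemma~\ref{lem:smallMsgLDC} the inner decoder reconstructs $r$ with failure probability $2^{-\Theta(|r|/\log n)}$, which can be driven below $\eps/3$ by taking the hidden constant in $|r|$ sufficiently large. Second, conditional on having the correct $\pi$, the indices $\pi^{-1}((i'-1)n_0, i'n_0]$ are the preimage of an $n_0$-element set under $\pi$; for a uniform random permutation, Lemma~\ref{lem:rponeset} bounds the number of these that hit $W$ by $2.2\delta n_0$ with probability at least $1 - 2^{-\Theta(n_0)}$. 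Since this event depends only on $\kappa = n_0$ values of $\pi$, the almost $\kappa$-wise independence with total-variation error $\eps_\pi = \eps/10$ implies the same bound holds up to an additive $\eps_\pi$ slack, so a union bound with the seed-recovery step yields total failure at most $\eps$, after which $\Dec_0$ corrects the remaining $\leq 2.2\delta n_0 < d_0/2$ errors by choosing $\delta_0 > 2.2\delta$. Lemma~\ref{lem:constr3Correct} records this chain, and the theorem follows by combining the three lemmas.
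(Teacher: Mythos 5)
Your proposal takes essentially the same route as the paper: the paper's proof of Theorem~\ref{thm:LDCHamming1} is a one-line citation of Construction~\ref{construct:randEncLDCHamming1} together with Lemmas~\ref{lem:constr3efficiency}, \ref{lem:constr3querynum}, and \ref{lem:constr3Correct}, and your proposal correctly instantiates that construction, verifies the rate bound $k=\Omega(n)$, and reconstructs the content of each of those three lemmas (including the crucial observation that obliviousness makes the error set $W$ independent of $\pi$, the seed-recovery step via Lemma~\ref{lem:smallMsgLDC}, and the concentration bound plus $\eps_\pi$ slack for the almost $\kappa$-wise independent permutation). The details match the paper's argument and the parameter choices are compatible.
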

\begin{proof}
We use Construction \ref{construct:randEncLDCHamming1}.
The theorem directly follows from Lemma \ref{lem:constr3efficiency}, \ref{lem:constr3querynum}, \ref{lem:constr3Correct}.

\end{proof}

\subsection{With Flexible Failure Probabilities}

Next we consider a stronger definition which can handle all decoding failure probabilities.

%
%
%
%
%
%
%
%

Intuitively we want to use the previous construction for different $eps$, and then combine them. It turns out this indeed can work but with some extra techniques while the information rate is $\Theta(1/\log {n})$.

\begin{construction}

\label{constr:flexOblivious}
We construct a $ (n, k = \Theta(n/\log n),  \delta = \Theta(1) ) $-locally decodable  binary code.

Let $(\Enc_0, \Dec_0)$ be a binary  $(n_0, k_0, d_0)$ error  correcting code with   $n_0 =  \gamma_0^{-1}  \log n $, $k_0 = \log n$, $d_0 =2\delta_0   n_0+1$, for constant $\gamma_0, \delta_0 \in [0,1]$. 

Let $(\Enc_i, \Dec_i)$ be the $(n_i = n/n_0, k, \delta_i = \Theta(1), q_i, \eps = 2^{-2^i})$ LDC from Theorem \ref{thm:LDCHamming1}, $i\in [\log n]$.

Encoding function $\Enc:\{0,1\}^{k = \Omega(n)} \rightarrow \{0,1\}^{n}$ is  as follows.

\begin{enumerate}
\item On input $x \in \{0,1\}^k$, compute $y_i = \Enc_i(x)$ for every $i\in [\log n]$.
 
\item  Let $M$ be a $ \log n \times n/\log n $ matrix s.t. $M[i][j] $ is the $j$-th bit of $y_i$.

\item Output $z = (\Enc_0(M_1), \ldots,  \Enc_0(M_{n_1})$ where $ M_j , j\in [n_1]$ is the $j$-th column of $M$.

\end{enumerate}

Decoding function $\Dec:[k]\times\{0,1\}^* \times [0,1] \rightarrow \{0,1\}$ is as follows.

\begin{enumerate}

\item On input $ u, z, \eps  $, find the smallest $i$ s.t. $2^{-2^i} \leq \eps$; If it cannot be found, then query the whole $z$; 

\item Compute $w = \Dec_i(u, y_i)$ but whenever $\Dec_i$ wants to query an $j$-th bit of $y_i$, we query $z_j$  and get $ y_i[j] $ from $\Dec_0(z_j)$;

\item Output $w$.

\end{enumerate}

\end{construction}

\begin{lemma}
\label{lem:constr4Efficiency}
The encoding and decoding are both efficient.

\end{lemma}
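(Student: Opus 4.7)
The plan is to verify efficiency by auditing each step of Construction \ref{constr:flexOblivious} and confirming that (a) every subroutine invoked runs in polynomial time and (b) the number of invocations is polynomial. There is no real obstacle here; it is a routine tally, but the one point worth paying attention to is that the inner code $(\Enc_0,\Dec_0)$ operates on blocks of length $n_0 = O(\log n)$, so even a brute-force implementation of it runs in time $2^{O(\log n)} = \poly(n)$.

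For the encoding, I would first note that Step~1 invokes $\Enc_i$ for each of the $\log n$ values of $i$; by Theorem \ref{thm:LDCHamming1} each $\Enc_i$ runs in time $\poly(n)$, so this step costs $\poly(n)$ total. Step~2 is merely a rearrangement of $n$ bits into a $\log n \times (n/\log n)$ matrix and is trivially polynomial. Step~3 applies $\Enc_0$ to each of the $n/\log n$ columns, and since $\Enc_0$ has input/output length $\Theta(\log n)$ it runs in time $\poly(\log n)$ per call (even brute-force), for a total of $\poly(n)$. Summing gives $\poly(n)$ encoding time.

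For the decoding, Step~1 searches for the smallest $i$ with $2^{-2^i} \leq \eps$, which takes $O(\log\log(1/\eps))$ time; in the degenerate case where no such $i$ exists we read all of $z$, still in $\poly(n)$ time. Step~2 invokes $\Dec_i$, which is polynomial-time by Theorem \ref{thm:LDCHamming1}, but with each of its queries intercepted: whenever $\Dec_i$ asks for bit $y_i[j]$, we instead read the $j$-th block $z_j$ of length $n_0 = O(\log n)$ and apply $\Dec_0$ to recover $y_i[j]$. Since $\Dec_0$ acts on $O(\log n)$ bits it runs in time $\poly(\log n)$, so each intercepted query carries only a $\poly(\log n)$ overhead. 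As $\Dec_i$ makes at most $\poly(n)$ queries, the total decoding time remains $\poly(n)$, completing the verification.
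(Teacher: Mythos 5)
Your proof is correct and follows the same approach as the paper's (which is just a one-line remark that every step and subroutine is polynomial-time); you simply spell out the step-by-step accounting, including the useful observation that the inner code $(\Enc_0,\Dec_0)$ on $O(\log n)$-bit blocks is polynomial even by brute force.
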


\begin{proof}

Every step in the construction can be realized in polynomial time. The called functions in the construction are all from efficient constructions. 

\end{proof}

\begin{lemma}
\label{lem:constr4QueryNum}
The number of queries for the decoding algorithm is $q = O(\log^2 n \log \frac{1}{\eps})$.

\end{lemma}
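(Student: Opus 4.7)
The plan is to trace the two nested sources of queries in the decoding procedure and multiply their bounds. The outer source is the call to $\Dec_i$ from Theorem \ref{thm:LDCHamming1}, instantiated with parameters $n_i = n/n_0 = \Theta(n/\log n)$ and $\eps_i = 2^{-2^i}$. By that theorem, $\Dec_i$ makes $q_i = O(\log n_i \cdot \log(1/\eps_i)) = O(\log n \cdot 2^i)$ queries into $y_i$. The inner source is the wrapper that converts each query to a bit of $y_i$ into a query to a block $z_j = \Enc_0(M_j)$ of length $n_0 = O(\log n)$, followed by a call to $\Dec_0$; so each bit of $y_i$ costs $O(\log n)$ queries into $z$.

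Next I would bound $2^i$ in terms of the target failure probability $\eps$. By definition, $i$ is the smallest index in $[\log n]$ with $2^{-2^i} \le \eps$. If $i \geq 2$, minimality gives $2^{-2^{i-1}} > \eps$, so $2^{i-1} < \log(1/\eps)$, hence $2^i \le 2\log(1/\eps)$; the boundary case $i=1$ is absorbed in the same bound up to a constant. Multiplying the two costs yields
\[
q \;\le\; q_i \cdot n_0 \;=\; O(\log n \cdot 2^i) \cdot O(\log n) \;=\; O(\log^2 n \cdot \log(1/\eps)),
\]
which is the claimed bound.

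Finally I would handle the degenerate branch where no valid $i$ exists, i.e.\ when $\eps < 2^{-n}$. In that case the decoder queries all of $z$, which has length $n$. Since $\log(1/\eps) > n$ in this regime, we trivially have $n \le \log^2 n \cdot \log(1/\eps)$, so the bound still holds. There is no substantive obstacle here; the only thing to be careful about is to invoke the correct parameters of $\Dec_i$ (in particular that $\log n_i = \Theta(\log n)$) and to verify the minimality argument that keeps $2^i$ proportional to $\log(1/\eps)$ rather than much larger.
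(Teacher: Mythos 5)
Your proposal is correct and follows the same two-level decomposition as the paper's proof: $O(\log n \log(1/\eps))$ outer queries from $\Dec_i$, each inflated by a factor of $n_0 = O(\log n)$ for the column decoding. You are in fact a bit more careful than the paper, which silently identifies $\log(1/\eps_i) = 2^i$ with $O(\log(1/\eps))$; your minimality argument for $i$ makes that step explicit, and you also cover the degenerate $\eps < 2^{-n}$ branch, neither of which the paper bothers to spell out.
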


\begin{proof}

In step 2 of the decoding, calling the decoding algorithm from Theorem \ref{thm:LDCHamming1} takes $O(\log n \log \frac{1}{\eps})$ number of queries. For each such query, the algorithm actually queries 
$n_0$ bits. So the total number of queries is as stated.

\end{proof}

\begin{lemma}
\label{lem:constr4Correct}
Assume there are $\delta = \delta_0 \min_{j\in [n_1]}\{\delta_j\} $ fraction of errors, then with probability $1-\eps$, the decoding can get $x$ correctly.

\end{lemma}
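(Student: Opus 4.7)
My plan is to reduce the correctness of the outer decoding to the correctness guarantee of $\Dec_i$ from Theorem~\ref{thm:LDCHamming1}, using the inner code $(\Enc_0, \Dec_0)$ as a filter that ``cleans up'' most of the adversarial errors before they reach $\Dec_i$. The key quantity is the number of columns $j \in [n_1]$ whose block $\Enc_0(M_j)$ in $z$ receives more than $\delta_0 n_0$ adversarial errors; I will call such a column \emph{bad}, since only on bad columns can $\Dec_0$ possibly return a symbol different from $M_j$, and consequently only bad columns can cause the decoder to feed an incorrect bit to the simulated run of $\Dec_i$.

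The first step is a simple pigeonhole bound on the number of bad columns. Since the total number of errors is at most $\delta n = \delta_0 \left(\min_j \delta_j\right) n$ and each bad column absorbs strictly more than $\delta_0 n_0$ errors, the number of bad columns is at most
\[
\frac{\delta_0 \left(\min_j \delta_j\right) n}{\delta_0 n_0} \;=\; \left(\min_j \delta_j\right) \frac{n}{n_0} \;=\; \left(\min_j \delta_j\right) n_1 \;\leq\; \delta_i \, n_1,
\]
for every $i \in [\log n]$ (in particular the $i$ chosen by the decoder). The second step is to observe that from the viewpoint of the simulated $\Dec_i$, every query to $y_i[j]$ is answered with $M_j[i]$ whenever column $j$ is good, and can only be answered incorrectly on the at most $\delta_i n_1$ bad columns. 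Thus the string presented to $\Dec_i$ is a corruption of the true codeword $y_i = \Enc_i(x)$ containing at most $\delta_i n_1$ Hamming errors, which is exactly the error budget the code $(\Enc_i, \Dec_i)$ is designed to tolerate.

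The third step is to invoke the correctness guarantee of Theorem~\ref{thm:LDCHamming1}: conditioned on the error pattern on $y_i$ having relative weight at most $\delta_i$, the randomized decoder $\Dec_i$ recovers $x_u$ with probability at least $1 - 2^{-2^i}$. By the decoder's choice of the smallest $i$ with $2^{-2^i} \leq \eps$, this failure probability is at most $\eps$, and in the corner case where no such $i$ exists (i.e.\ $\eps > 1/2$) the decoder reads all of $z$ and recovers $x$ outright, so the bound holds trivially.

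The one genuinely delicate point, and the place I expect a careful reader to pause, is that Theorem~\ref{thm:LDCHamming1} is an oblivious-channel guarantee: the error pattern handed to $\Dec_i$ must be independent of the private randomness of $\Enc_i$. In our reduction the induced error pattern on $y_i$ is a deterministic function of (a) the adversary's error pattern on $z$ and (b) the column contents $M_j$ (through which bad-column decoding failures land). Since the channel is oblivious the adversary's error pattern is independent of the randomness of $\Enc_i$, and the set of bad columns is determined purely by the error locations on $z$, not by their interaction with the codeword; hence the induced errors on $y_i$ are a valid oblivious error pattern of weight at most $\delta_i n_1$, and Theorem~\ref{thm:LDCHamming1} applies as stated, completing the argument.
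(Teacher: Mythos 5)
Your argument is correct and follows essentially the same route as the paper's proof of Lemma~\ref{lem:constr4Correct}: both bound the number of ``bad'' columns by pigeonhole at $(\min_j \delta_j)\,n_1 \le \delta_i n_1$ and then hand the residual error pattern to $\Dec_i$ via Theorem~\ref{thm:LDCHamming1}. Your obliviousness remark is a welcome clarification the paper leaves implicit, though one phrase is slightly imprecise --- the induced error pattern on $y_i$ is \emph{not} itself oblivious, since which bad columns actually flip $y_i[j]$ depends on the codeword through $M_j$; what saves the argument is that the induced errors are always \emph{contained} in the fixed set of bad columns (of size at most $\delta_i n_1$, determined solely by the adversary's positions on $z$), and the failure event for $\Dec_i$ is monotone in the error set, so the oblivious guarantee of Theorem~\ref{thm:LDCHamming1} applied to that fixed superset still yields the stated bound.
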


\begin{proof}

As $\delta = \delta_0 \min_j\{\delta_j\} $, there are at most $ \delta_j n_1 $ number of $z_j$s which each has more than $\delta_0$ fraction of errors.

As a result, for $y_j$, there are  at most $ \delta_j n_1$ number of bits which cannot be computed correctly during decoding step 2. So $\Dec_j$ can compute $x[i]$ correctly by Theorem \ref{thm:LDCHamming1}.

\end{proof}

\begin{theorem}

There is an efficient construction of $(n, k =\Omega(n/\log n) , \delta = \Theta(1))$ random LDC which can recover any one message bit with probability $1-\eps$, for any $\eps \in [0,1]$, by doing $q = O(\log^2 n \log \frac{1}{\eps})$ queries. 

\end{theorem}

\begin{proof}

It directly follows from Construction \ref{constr:flexOblivious},  Lemma \ref{lem:constr4Efficiency}, \ref{lem:constr4QueryNum}, \ref{lem:constr4Correct}.

\end{proof}

Similarly we also have the following result for the case of shared randomness
\begin{theorem}

There is an efficient construction of $(n, k =\Omega(n/\log n) , \delta = \Theta(1))$ random LDC which can recover any one message bit with probability $1-\eps$, for any $\eps \in [0,1]$, by doing $q = O(\log^2 n \log \frac{1}{\eps})$ queries. 

\end{theorem}

\begin{proof}[Proof Sketch]

We modify Construction \ref{constr:flexOblivious} by letting   $(\Enc_i, \Dec_i), i\in [\log n]$ be from Theorem \ref{thm:LDCHamming1sr}.

Note that the outputted codeword is independent of the permutations used in $(\Enc_i, \Dec_i), i\in [\log n]$ due to the using of masks in the construction of Theorem \ref{thm:LDCHamming1sr}. This guarantees that a similar analysis can still work for this case.

\end{proof}	

\section{Constructions against Edit Error}
The constructions against edit error requires an additional layer of encoding. To make our proof easier to describe, we will utilize the following greedily constructed code from \cite{schulman1999asymptotically}. 

\begin{lemma}
	[Implicit from \cite{schulman1999asymptotically}]
	\label{greedily constructed insdel code}
	For small enough $\delta$, there exists code $\mathbb{C}: \{0,1\}^n \rightarrow \{0,1\}^{4n}$ such that the edit distance between any two codewords is at least $\delta$ and for any codeword, every interval has at least half of 1's.
\end{lemma}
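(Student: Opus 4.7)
My plan is to adapt a Gilbert--Varshamov style greedy argument, restricted to a pre-specified pool $B \subseteq \{0,1\}^{4n}$ of strings that already satisfy the density property; I then select $2^n$ elements of $B$ that are pairwise far in edit distance, and use these as the codewords.

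First I would fix the pool $B$ using a simple interleaving trick. For each $s \in \{0,1\}^{2n}$, let $\iota(s) \in \{0,1\}^{4n}$ be the string obtained by placing a $1$ in every odd position and the bit $s_i$ in each even position $2i$. Then for any interval of length $\ell \ge 2$, at least $\lfloor \ell/2 \rfloor$ of its positions are odd and hence hardcoded to $1$, so the density property (every interval has at least half its bits equal to $1$) is automatic. This yields a pool $B = \{\iota(s) : s \in \{0,1\}^{2n}\}$ of size $|B| = 2^{2n}$, every element of which is already ``balanced.''

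Next I would enumerate the messages $x_1,\ldots,x_{2^n}$ arbitrarily and pick $\mathbb{C}(x_j) \in B$ one by one, insisting that $d_E(\mathbb{C}(x_j), \mathbb{C}(x_i)) \ge 4n\delta$ for every $i < j$. Existence of a valid choice at step $j$ reduces to the standard edit-ball volume bound: the number of strings within edit distance $4n\delta$ of a fixed length-$4n$ string is at most $\bigl(\binom{4n}{\le 4n\delta}\bigr)^2 \cdot 2^{4n\delta}$ (choose the deleted positions, the inserted positions, and the inserted bits), which for small $\delta$ is $2^{O(n\delta \log(1/\delta))}$. Summing over at most $2^n - 1$ previously chosen codewords, the ``forbidden'' region has total size at most $2^n \cdot 2^{O(n\delta \log(1/\delta))}$, strictly less than $|B| = 2^{2n}$ once $\delta$ is a small enough absolute constant, so a valid codeword in $B$ always remains. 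The density property needs no separate verification because we only ever choose from $B$.

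The main obstacle is simply making the two exponents line up: one needs $1 + O(\delta \log(1/\delta)) < 2$ (after normalizing by $n$), which is what forces $\delta$ to be a small absolute constant; any more ambitious rate-vs-distance trade-off would require a sharper edit-ball estimate (e.g.\ Levenshtein's bound) rather than the crude one above. A mild interpretive subtlety is that the density condition as literally stated would fail on length-$1$ intervals containing a $0$, so one reads ``interval'' as having length at least $2$, which the interleaving construction handles cleanly. Note also that the resulting code is only guaranteed to exist, not to be efficiently computable, but since the lemma is used only on $O(\log k)$-bit inner blocks inside a concatenation, a brute-force search fits comfortably into polynomial time.
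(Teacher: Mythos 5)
Your proposal is correct and is essentially the same Gilbert--Varshamov style argument the paper uses: count the size of an edit ball, compare against the ambient space to run a greedy selection, and use an interleaved-ones trick to enforce the density constraint. The one genuine structural difference is the order of operations. The paper first greedily builds a code inside $\{0,1\}^m$ with pairwise edit distance $\delta m$ (using $\binom{m}{2\delta m}^2 2^{2\delta m} \le 2^{m/2}$) and \emph{then} interleaves a $1$ into every other position, which requires the extra claim --- asserted in the paper without proof --- that this interleaving cannot decrease edit distance. You instead fix the interleaved pool $B = \iota(\{0,1\}^{2n}) \subseteq \{0,1\}^{4n}$ up front, so every candidate already satisfies the density property, and run the greedy selection directly inside $B$. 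This is a modest but real simplification: you never need to reason about how interleaving interacts with edit distance, because you only ever compare fully-formed pool elements. The price is that your ambient set shrinks from $2^{4n}$ to $|B| = 2^{2n}$, but the arithmetic $2^n \cdot 2^{O(n\delta\log(1/\delta))} < 2^{2n}$ still closes with room to spare for small $\delta$, so nothing is lost. Two minor notes: your ball bound uses $\binom{4n}{\le 4n\delta}$ for both deletions and insertions, whereas replacing each of up to $4n\delta$ substitutions by a deletion-plus-insertion pair can produce up to $8n\delta$ of each, so the exponents should really carry a factor-of-two slack (the paper's version of the count has the same looseness); this only affects the constant threshold on $\delta$, not the conclusion. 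And you are right to flag that the ``every interval has at least half $1$'s'' condition must be read with intervals of length at least two --- the paper leaves that implicit.
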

The proof uses the same argument from \cite{schulman1999asymptotically}. The only difference is that their proof only considers insertion and deletion. Our slightly modified proof includes substitution as a type of error.
\begin{proof}
	
	\par For a codeword $y$, the number of words in $\{0,1\}^n$ that is within edit distance $2d$ is at most $\binom{n}{2d}^22^{2d}$. That is because each substitution can be replaced by a deletion and an insertion. For any two codewords $x$ and $y$ such that $d_E(x,y)\leq d$, $x$ can be transformed into $y$ with at most $2d$ insertions and $2d$ deletions. There are $\binom{n}{2d}$ ways to delete $2d$ characters from $y$, $\binom{n}{n-2d}$ ways to put the remaining characters in proper position of $x$, and $2^{2d}$ ways to choose to the inserted characters. 
	\par Substitute $d$ with $\delta n$ for some small constant $\delta$, since $\binom{n}{k}\leq (\frac{ne}{k})^k$, we have 
	\[\binom{n}{2\delta n}^2 2^{2\delta n}\leq (\frac{ne}{2\delta n})^{4\delta n}2^{2\delta n} \leq (\frac{e^4}{4\delta^4})^{\delta n} = (2^{\log \frac{e^4}{4\delta^4}} )^{\delta n}\]
	Pick $\delta\in (0,0.018)$, we have $\binom{n}{2\delta n}^2 2^{2\delta n}\leq 2^{1/2 n}$. This gives a greedy construction of edit error code with rate $\frac{1}{2}$ and tolerates a $\delta$ fraction of edit error for constant $\delta\leq 0.018$. We then ensure every interval has at least half 1's by inserting 1's into every other position. Note that this operation will not decrease the edit distance. It gives us a code with rate $1/4$.
\end{proof}
\par  In the discussion below, we will use the following error correcting code for edit distance: 
\[\mathbb{C}_0:\{0,1\}^N\rightarrow \{0,1\}^{5N}\]
such that for each message $S\in \{0,1\}^N$, $\mathbb{C}_0(S)$ is composed of two parts. The first part is a codeword $\mathbb{C}'_0(S)$ where $\mathbb{C}'_0:\{0,1\}^N\rightarrow \{0,1\}^{4N}$ is a greedily constructed code that can tolerate a constant fraction of edit errors. The second part is a buffer of 0's of length $N$. That is, $\mathbb{C}_0(S) = \mathbb{C}'_0(S)\circ 0^N$. We assume the normalized edit distance between any two codewords of $\mathbb{C}_0(S)$ is at least $\delta'$.

Another key component in our construction against edit error is a searching algorithm developed by \cite{ostrovsky2015locally}. 

Their work provides an algorithm for searching from a weighted sorted list $L$ with a constant fraction of errors. An element $(i,a_i)$ in the sorted list is composed of two parts, an index $i$ and content in that element, $a_i$. All elements in the list are sorted by their index, i.e. the $j$-th element in $L$ is $(j,a_j)$. Beyond that, each element is equipped with a non-negative weight. When sampling from the list, each element is sampled with probability proportional to its weight. 


In \cite{ostrovsky2015locally}, the authors proved the following result.

\begin{lemma}
	\label{search}[Theorem 16 of \cite{ostrovsky2015locally}]
	Assume $L'$ is a corrupted version of a weighted sorted list $L$ with $k$ elements, such that the total weight fraction of corrupted elements is some constant $\delta$ of the total weight of $L$. And the weights have the property that all sequences for $r\geq3$ elements in the list have total weight in the range $[r/2,2r]$. Then, there is an algorithm for searching $L$. For at least a $1-O(\delta)$ fraction of the original list's elements, it recovers them with probability at least $1-\mathsf{neg}(k)$. It makes a total of $O(\log^3 k)$ queries.
\end{lemma}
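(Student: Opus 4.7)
The plan is to build a noisy binary search procedure on the corrupted list $L'$, using weighted sampling to simulate queries in the original list $L$ despite the positional drift caused by edit errors. The weight condition (every $r \geq 3$ consecutive elements have total weight in $[r/2, 2r]$) is exactly what lets us equate ``a weighted random sample from an index interval $[l,r]$ of $L$'' with ``a uniform random positional sample from the corresponding stretch of $L'$'' up to constant factors. So I would first formalize this correspondence: a positional interval $P$ of $L'$ of length $w$ contains roughly $w/2$ to $2w$ original elements (with corruption shifting this by at most $\delta w$ on average), and conversely the positional span of any index interval $[l,r]$ in $L'$ has length $\Theta(r-l)$.

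Next, I would describe the algorithm for retrieving the element with target index $i^\star$. Maintain a candidate index window $[l,r]$, initially $[1,k]$; at each step, sample $\Theta(\log^2 k)$ elements at weighted-random positions from the stretch of $L'$ associated with $[l,r]$, read each sampled element's stored index $j$, and use the majority vote over these stored indices (clipped/rounded to a side of the midpoint) to decide whether to recurse on $[l,m]$ or $[m,r]$. After $O(\log k)$ rounds the window shrinks to a single index, which we output. Totaling the queries across rounds gives $O(\log^3 k)$.

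The correctness analysis has two layers. For a fixed target $i^\star$ whose entire search path traverses index intervals each of which contains at most, say, $4\delta$ fraction of corrupted weight, each single weighted sample is ``good'' (uncorrupted and in a roughly correct position) with probability at least $1-O(\delta)$, so a Chernoff bound over $\Theta(\log^2 k)$ samples makes the majority correct with probability $1-k^{-\Omega(\log k)}$; a union bound over the $O(\log k)$ steps gives a total failure probability of $k^{-\omega(1)} = \mathsf{neg}(k)$ per search. The second layer handles ``bad'' indices $i^\star$ whose search path necessarily passes through a heavily-corrupted sub-interval: I would charge each such index to the corruption in that sub-interval, using a covering/charging argument on the binary tree of dyadic intervals. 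Since the total corrupted weight is $\delta$ and the tree has depth $O(\log k)$, the fraction of indices whose search path enters a sub-interval of corruption level $\gg \delta$ is at most $O(\delta)$.

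The main obstacle, I expect, is making the charging argument on bad indices tight. A naive bound gives $O(\delta \log k)$ fraction rather than $O(\delta)$, because the same corruption can pollute paths at many tree levels. To avoid this I would use the weight condition to show that once the current window has index-width $\gg 1/\delta$, the expected fraction of corrupted weight in a uniformly chosen sub-window is still $\Theta(\delta)$, so ``bad'' windows form a geometric series along each root-to-leaf path and the total mass of bad leaves is $O(\delta)$ rather than $O(\delta \log k)$. Secondary technical points (choosing how to interpret an index $j$ returned from a corrupted element as a ``side'' vote, and making sure the weighted sampling itself is implementable with $O(1)$ queries per sample using the positional-length interpretation of weights) should be straightforward once the main skeleton is in place.
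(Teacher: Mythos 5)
The paper does not prove this lemma: it is imported as Theorem~16 of \cite{ostrovsky2015locally} and used as a black box, so there is no in-paper proof to compare your attempt against. On its own terms, your sketch is a plausible reconstruction of what such a proof must look like: a noisy binary search with $O(\log k)$ rounds, $\Theta(\log^2 k)$ weighted-positional samples per round giving a per-round majority-vote failure probability of $k^{-\Omega(\log k)}$, and the weight condition supplying the constant-factor dictionary between positional stretches of $L'$ and index intervals of $L$.

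The genuine gap is exactly the one you flag and then mis-patch: the charging argument for bad indices. Your claim that the bad windows ``form a geometric series along each root-to-leaf path'' because a uniformly chosen sub-window inherits corruption fraction $\Theta(\delta)$ in expectation does not hold up. The corruption is adversarial, not random, so averaging over sub-windows says nothing about the specific nested chain of windows the search actually traverses; a single concentrated corrupted stretch can keep the corruption fraction near $1$ through many consecutive levels of that chain. The correct and simpler argument uses laminarity rather than a decay estimate. Fix a \emph{constant} threshold $\tau$ (say $\tau = 1/4$; a constant threshold is also exactly what your Chernoff step needs, since each sample must be corrupted with probability bounded away from $1/2$). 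Call a dyadic window \emph{bad} if its corrupted weight exceeds $\tau$ times its own total weight. The bad windows sit inside a laminar family, so the maximal bad windows are pairwise disjoint, and each carries at least a $\tau$ fraction of its own weight in corruption; hence their total weight is at most $\delta W / \tau$, where $W$ is the total weight. Every index whose search path ever enters a bad window lies inside some maximal bad window, so the fraction of such indices is $O(\delta/\tau) = O(\delta)$, with no $\log k$ loss and no geometric series needed. You should also fold the $O(\delta)$ positional drift of the index-to-position correspondence into this same bad set rather than treating it as a secondary technicality; otherwise the ``correct side'' vote computed from a sampled element's stored index is not well defined near window boundaries.
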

This lemma enables us to search from a weighted sorted list (with corruption) with few ($\polylog k$) queries. To make our proof self-contained, we will describe how to turn the encoded message in our construction into a weighted sorted list in the proof.

We are now ready to describe our construction.

\subsection{Construction with Fixed Failure Probability}

\subsubsection{Shared Randomness}

We first give the construction of an $(n,k = \Omega(n),\delta = O(1),q=\polylog k\log\frac{1}{\eps}, \eps)$ randomized LDC. As before, We start with the construction assuming shared randomness. 

\begin{construction}
	\label{construct:randEncLDCEdit}
	\par We construct an $(n,k=\Omega(n),\delta = O(1),q =\polylog k\log\frac{1}{\eps},\eps)$-LDC with randomized encoding.
	
	Let $\delta_0$, $\gamma_0$ be some proper constants in $(0,1)$.
	
	Let $(\Enc_0,\Dec_0)$ be an asymptotically good $(n_0,k_0,d_0)$ error correcting code for Hamming distance over alphabet $\{0,1\}^{10\log k}$. Here we pick $n_0 = O(\log\frac{1}{\eps})$, $k_0 = \gamma_0n_0$, and $d = 2\delta_0 n_0 + 1$.
	
	Let $\pi$ be a random permutation. And $r_i\in \{0,1\}^{\log k}$ for $i\in[\frac{n_0 k}{k_0 \log k}]$ be $\frac{n_0 k}{k_0 \log k}$ random masks. Both $\pi$ and $r_i$'s are shared between the encoder and decoder.
	
	Let $\mathbb{C}_0:\{0,1\}^{2\log k}\rightarrow \{0,1\}^{10\log k}$ be the asymptotically good code for edit error described previously that can tolerate a $\delta'$ fraction of edit error.
	
	The encoding  function $\Enc:\{0,1\}^k\rightarrow \{0,1\}^n$ is a random function as follows
	\begin{enumerate}
		\item On input $x\in \{0,1\}^k$, view $x$ as a string over alphabet$\{0,1\}^{\log k}$of length $k/\log k$. We write $x = x_1x_2\cdots x_{k/\log k}\in (\{0,1\}^{\log k})^{k /\log k}$;
		\item Divide $x$ into small blocks of length $k_0$, s.t. $x=B^{(0)}_1\circ B^{(0)}_2 \circ  \cdots \circ B^{(0)}_{k/(k_0\log k)})$. Here, $B^{(0)}_i\in (\{0,1\}^{\log k})^{k_0}$ for $i\in [k/(k_0\log k)]$ is a concatenation of $k_0$ symbols in $x$;
		\item Encode each block with $\Enc_0$. Concatenate them to get $y^{(1)} = \Enc_0(B^{(0)}_1)\circ \Enc_0(B^{(0)}_2)\circ \cdots \circ \Enc_0(B^{(0)}_{k/(k_0\log k)})$. Notice that each $\Enc_0(B^{(0)}_1)$ is composed of $n_0$ symbols in $\{0,1\}^{\log k}$. Write $y^{(1)}$ as a string over alphabet $\{0,1\}^{\log k} $, we have $ y^{(1)} = B^{(1)}_1\circ B^{(1)}_2\circ \cdots \circ B^{(1)}_{\frac{n_0k}{k_0\log k}}$ such that  $B^{(1)}_i\in \{0,1\}^{\log k} $;
		\item Let $N = \frac{n_0k}{k_0\log k}$. Permute these $N$ symbols of $y^{(1)}$ with permutation $\pi$ to get $y^{(2)}=B^{(2)}_1\circ B^{(2)}_2\circ \cdots\circ B^{(2)}_{N}$ such that $B^{(2)}_{\pi(i)} = B^{(1)}_{i}$;
		\item Let $b_i\in \{0,1\}^{\log k}$ be the binary representation of $i \in [N]$. This is fine since $N<k$ when $k$ is larger enough. We compute $B^{(3)}_i = \mathbb{C}_0(b_i\circ (B^{(2)}_i\oplus r_i))\in \{0,1\}^{10\log k}$ for each $i\in [T]$. We get $y = B^{(3)}_1\circ  B^{(3)}_2\circ \cdots \circ B^{(3)}_{N}\in (\{0,1\}^{10\log k })^{N}$;
		\item Output $y$.
	\end{enumerate}
	The decoding function $\Dec: [k]\times \{0,1\}^n\rightarrow \{0,1\}$ takes two inputs, an index $i_0\in [k]$ of the message bit the decoder wants to know and $\omega \in \{0,1\}^n $, the received (possibly corrupted) codeword. It proceeds as follows 
	\begin{enumerate}
		\item On input index $i_0$ and the received codeword $\omega$. We assume the $i_0$-th bit lies in $B^{(0)}_i$, i.e. the $i$-th block of $x$;
		\item Notice that $\Enc_0(B^{(0)}_i) = B^{(1)}_{(i-1)n_0+1}\circ B^{(1)}_{(i-1)n_0+2}\circ \cdots \circ B^{(1)}_{in_0}$. For each $j\in  \{(i-1)n_0+1, (i-1)n_0+2,\ldots, i n_0\}$, search from $y$ to find the block $B^{(3)}_{\pi(j)}$ using the algorithm from \cite{ostrovsky2015locally}. Then we can get a possibly corrupted version of $\Enc_0(B^{(0)}_i)$
		\item Run the decoding algorithm $\Dec_0$ to find out $B^{(0)}_i$. This gives us the $i_0$-th bit of $x$.
	\end{enumerate}
\end{construction}

\begin{lemma}
	\label{fixed_sharedrand}
	The above construction \ref{construct:randEncLDCEdit} gives an efficient $(n,k = \Omega(n),\delta = O(1),q=\polylog k\log\frac{1}{\eps}, \eps)$ randomized LDC against edit error.
\end{lemma}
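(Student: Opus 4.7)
The plan is to verify that Construction~\ref{construct:randEncLDCEdit} has the stated rate and efficiency, then argue correctness via two main ingredients: an independence argument showing that the adversary's error pattern is effectively oblivious to $\pi$, and the searching algorithm of \cite{ostrovsky2015locally} combined with the permutation concentration bound Lemma~\ref{lem:rponeset}. By construction $n = 10(n_0/k_0)k = O(k)$, giving rate $\Omega(1)$, and every step runs in polynomial time: the outer code $(\Enc_0,\Dec_0)$ is efficient over an alphabet of size $\mathsf{poly}(k)$, the inner code $\mathbb{C}_0$ operates on blocks of length $O(\log k)$ (and so can be built greedily in $\mathsf{poly}(k)$ time by Lemma~\ref{greedily constructed insdel code}), and both sampling/applying $\pi$ and the search subroutine from Lemma~\ref{search} are polynomial time.

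Next I would show that $\pi$ and the output codeword $y$ are independent, mirroring the argument in the proof of Theorem~\ref{thm:LDCHamming1sr}. For any fixed $\pi$, the masked symbols $B^{(2)}_i \oplus r_i$ are uniform and independent over $(\{0,1\}^{\log k})^N$ since each $r_i$ is uniform on $\{0,1\}^{\log k}$, so the distribution of $y$ is identical for every choice of $\pi$. Hence even an adversary that reads $y$ and then picks a $\delta n$-sized edit pattern does so independently of $\pi$.

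The decoding analysis then proceeds in three sub-steps. First, I would invoke Lemma~\ref{search} by casting $y$ as a weighted sorted list, assigning each block $B^{(3)}_j$ a weight equal to its surviving length in the received codeword; the ``every interval has half $1$'s'' property of Lemma~\ref{greedily constructed insdel code} together with $\mathbb{C}_0$ having constant rate and the adversary performing at most $\delta n$ edits ensures that any $r\geq 3$ consecutive blocks have total weight in $[r/2,2r]$. The lemma then yields a set $U \subseteq [N]$ of unrecoverable indices with $|U| = O(\delta) N$, with every other block returned correctly by a single search with probability $1-\mathsf{neg}(k)$ at cost $O(\log^3 k)$ queries. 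Second, since by the previous paragraph $U$ is independent of $\pi$, I can apply Lemma~\ref{lem:rponeset} to the set $S = \{(i-1)n_0+1,\dots, i n_0\}$ of positions whose $\pi$-images the decoder queries, concluding that $|\pi(S)\cap U| \leq \delta_0 n_0$ except with probability at most $\eps/2$, provided $\delta$ is chosen sufficiently small relative to $\delta_0$ and the hidden constant in $n_0 = \Theta(\log(1/\eps))$ is large enough. Third, a union bound over the $n_0$ individual searches (each failing with $\mathsf{neg}(k)$ probability) adds at most another $\eps/2$, so the $n_0$ recovered symbols differ from $\Enc_0(B^{(0)}_i)$ in fewer than $d_0/2$ positions, and $\Dec_0$ correctly returns $B^{(0)}_i$---hence the target bit---with probability at least $1-\eps$. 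The total query count is $n_0 \cdot O(\log^3 k) = \polylog(k)\cdot \log(1/\eps)$, as claimed.

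The main obstacle I expect is the independence step: one must argue carefully that the masks truly decouple $y$'s distribution from $\pi$, so that the adversarially chosen set $U$ (which is a function of $y$ and the adversary's error pattern) can be treated as a set independent of $\pi$ when applying the random-permutation concentration bound. A secondary technical point is checking the weight hypothesis of Lemma~\ref{search} in our setting; this is where the ``every interval has half $1$'s'' guarantee of Lemma~\ref{greedily constructed insdel code} becomes important, since without it an interval of the corrupted codeword could align with far fewer than $r/2$ blocks.
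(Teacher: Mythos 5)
Your proposal is correct and follows essentially the same route as the paper's proof: independence of the codeword from $\pi$ via the uniform masks, casting the received word as a corrupted weighted sorted list and invoking the search lemma of \cite{ostrovsky2015locally}, then applying the random-permutation concentration bound (Lemma~\ref{lem:rponeset}) to the $n_0$ queried positions, and finally a union bound with the $\mathsf{neg}(k)$ search-failure terms. Your write-up is if anything slightly more careful than the paper's in explicitly treating the unrecoverable set $U$ as determined by the (mask-hidden) codeword and hence independent of $\pi$ before invoking Lemma~\ref{lem:rponeset}, which is exactly the point the paper glosses with ``the same concentration bound as in the Hamming case.''
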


\begin{proof}
	We first show both the encoding and decoding can be done in polynomial time. Although for the ease of description, we picked the greedily constructed code $\mathbb{C}_0$, which can be inefficient. The codeword size of $\mathbb{C}_0$ is $O(\log k)$. Decoding one block can be finished in time polynomial in $k$. And for our purpose, we need to encode $O(k/\log k)$ blocks and decode $\polylog n\log {\frac{1}{\eps}}$ blocks. Thus, the additional time caused by this layer of encoding is polynomial in $n$. For the rest part, the analysis is similar to that of the Hamming case. Thus, our code is polynomial time. We note that $\mathbb{C}_0$ can be replaced by an efficient code for edit error.

	We use the same notation as in our construction. Let $y = B^{(3)}_1\circ B^{(3)}_2\circ \cdots \circ B^{(3)}_N$ be the correct codeword. We denote the length of each block $B^{(3)}_j$ by $b = 10\log k$ and if we view $y$ as a binary string, the length of $y$ is $N'( = Nb)$, which is $O(k)$.  We call the received (possibly corrupted) codeword $\omega$. Since we can always truncate (or pad with 0) to make the length of $\omega$ be $N'$, which will only increase the edit distance by a factor no more than 2. Without loss of generality, we assume the length of $\omega$ is also $N'$.
	
	The decoding function $\Dec$ has two inputs: an index $i_0\in [k]$ of the message bit the decoder wants to see and the received codeword $\omega$. The decoder also has access to the shared randomness, which has two parts: a permutation $\pi$ and $N$ random masks, each of length $\log k$. The first step is to figure out the indices of the blocks it wants to query by using the shared permutation $\pi$. Then, we can query these blocks one by one. We assume the $i_0$-th bit lies in $B^{(0)}_i$. Then, notice that $\Enc_0(B^{(0)}_i) = B^{(1)}_{(i-1)n_0+1}\circ B^{(1)}_{(i-1)n_0+2}\circ \cdots \circ B^{(1)}_{in_0}$. Our goal is to find the block $B^{(3)}_{\pi(j)}$ for each $j\in  \{(i-1)n_0+1, (i-1)n_0+2,\ldots, i n_0\}$.
	
	One thing we need to clarify is how to query a block since we do not know the starting point of each block in the corrupted codeword. This is where we use the techniques developed by \cite{ostrovsky2015locally}. In the following, assume we want to find the block $B^{(3)}_{i}$.
	
	We can view $y$ as a weighted sorted list $L$ of length $N$ such that the $i$-th element in $L$ is simply $B^{(3)}_i$ with weight $b$. We show that $\omega$ can be viewed as a corrupted list $L'$. There is a match from $y$ to $\omega$ which can be described by a function $f:[N]\rightarrow [N]\cup \{\perp\}$. If the $i$-th bit is preserved after the edit error, then $f(i) = j$ where $j$ is the position of that particular bit in $\omega$. If the $i$-th bit is deleted, then $f(i) = \perp$. 
	
	We say the $i$-th bit in $y$ is preserved after the corruption if $f(i)\neq \perp$. For each block  $B^{(3)}_{i} $ that is not completely deleted, let $v_{i} = f(u_i)$, such that $u_{i}$ is the index of the first preserved bit in the block $B^{(3)}_{i} $. We say the block $B^{(3)}_{i} $ is \emph{recoverable} if $\Delta_E(B^{(3)}_{i}, \omega_{[v'_i,v'_i+b-1]})\leq \delta'$ for some index $v'_i$ that is at most $\delta'b$ away from $v_i$.
	
	\begin{claim}
		\label{claim}
		If $\Delta_E(\omega,y)\leq \delta$, each block is recoverable with probability at least $1-\delta/\delta'$.
	\end{claim}
	\begin{proof}
		\par The code $\mathbb{C}_0$ is greedily constructed and resilient to $\delta'$ fraction of error. Here, $\delta$ is picked smaller than $\delta'$. To make an block corrupted, the adversary needs to produce at least a $\delta'$ fraction of edit error in that block. The adversary channel can corrupt at most $\frac{\delta}{\delta'}$ fraction of all blocks.
	\end{proof}

	The searching algorithm requires sampling some elements from the sorted list $L'$ corresponding to the corrupted codeword $\omega$. We now explain how to  do the sampling. We start by first randomly sample a position $r$ and read a substring of $2r+1$ bits $\omega_{[r-b,r+b]}$ from $\omega$. Then, we try each substring in $\omega_{[r-b,r+b]}$ of length $b$ from left to right until we find first substring that is $\delta_0$-close to some codeword of $\mathbb{C}$ under the normalized edit distance. If we did find such a substring, we decode it and get $b_j\circ B^{(2)}_{j} \oplus r_j$. Otherwise, output a special symbol $\perp$.
	
	We will regard consecutive intervals in $\omega$ as elements in the weighted sorted list where weight of the element is simply the length of the corresponding interval. The correspondence can be described as following. 
	
	For a recoverable block $B^{(3)}_i$, again, we let $v_{i} = f(u_i)$, such that $u_{i}$ is the index of the first preserved character in block $B^{(3)}_{i} $. We want to find an interval $I_i$ in $\omega$ to represent $B^{(3)}_{i} $ in the list $L'$. Since $B^{(3)}_i$ is a codeword of $\mathbb{C}_0$ of length $b$. Every interval in its first $4/5$ part has at least half of 1's and the last $1/5$ part are all 0's. For any $v\in[v_i-b+2\delta' b,v_i-2\delta'b]$, we know $\omega_{[v,v+b-1]}$ can not be $\delta'$ close to any codeword in $\mathbb{C}_0$. It is because the last $1/5$ part of $\omega_{[v,v+b-1]}$ contains at least $\delta' b$ 1's. Thus, in the sampling procedure, if $r\in[v_i+2\delta' b,v_i+b]$, it will return $B^{(3)}_{i} $. The length of $I_i$ is at least $1-2\delta' b$. Let $I_i$ be the maximal inteval containing $[v_i+2\delta' b,v_i+b]$, such that, if sampling $r$ in $I_i$, it will output same codeword in $\mathbb{C}$. Also, we argue the length of the inteval $I_i$ is no larger than $1+ 4 \delta'$. Since if $r \geq v_i+(1+2\delta') b$ or $r \leq v_i-2\delta' b$, any substring of length $b$ in $[r-b,r+b] $ is at least be $\delta'$ far from $B^{(3)}_{i} $ and thus output a different codeword. We note due to the existence of adversary insertion, it is possible to get $B^{(3)}_{i} $ outside of $I_i$. But this does not affect our analysis below.
	
	The above method for sampling a block from $\omega$ gives us a natural way to interpret $\omega$ as a corrupted weighted list $L'$. The length of $\omega$ is equal to the sum of weights of all its elements. For each recoverable block, the interval $I_i$ as described above is an element with weight equal to its length. We note that all $I_i$'s are disjoint. We consider the remaining characters in $\omega$ as corrupted elements in the list $L$. For those elements, the sampling algorithm will output wrong result or $\perp$. Since the interval of the remaining elements can be large, we divide such interval into small intervals each has length no larger than $b$ and consider each small intervals as an element. 
	
	Next, we argue the weight fraction of corrupted elements is small. Due to claim \ref{claim}, there are at least a $1-\delta/\delta'$ fraction of blocks recoverable after the corruption. For each recoverable block $B^{(3)}_{i}$, we can find a interval $I_i$ in $\omega$ with length at least $(1-2\delta_0)b$. Thus, the total weight of uncorrupted elements from the original list $L$ is at least $(1-\delta/\delta')(1-2\delta')N = (1-2\delta-2\delta_0-\delta/\delta_0)N$. Let $\delta_1 = 2\delta+2\delta'+\delta/\delta' = O(\delta)$, we know the total weight of corrupted elements is at most $\delta_1$ fraction of the total weight of list $L'$. We assume $\delta_1$ is small by properly picking $\delta$ and $ \delta'$.
	
	By lemma \ref{search}, $1-\delta_2$ fraction of elements can be decoded correctly with high probability ($1-\mathsf{neg}(n)$) for some constant $\delta_2 = O(\delta)$. Since the content of each block is protected by random masks. The adversary can learn nothing about the random permutation $\pi$ used for encoding. Each block is recoverable with same probability. We want to search from $y$ to find all blocks $B^{(3)}_{\pi(j)}$ for each $j\in  \{(i-1)n_0+1, (i-1)n_0+2,\ldots, i n_0\}$ to get a possibly corrupted version of  $\Enc_0(B^{(0)}_i)$. The proof of success probability follows the same concentration bound as in the Hamming case. Here, the random masks serve the same purpose as in the Hamming case. The content of each block (not the index) is not known to the adversary. Thus, the adversary can learn nothing about the permutation or the message. By picking properly the code $(\Enc_0,\Dec_0)$, using the concentration bound, the fraction of unrecoverable blocks among these $n_0$ blocks is larger than $1-1.1\delta_2$ with probability at least $1-\eps$.
	
	Finally, we count the number of queries made. Searching one block queries $\polylog k$ symbols from the corrupted codeword $\omega$. We need to search $n_0 = O(\log \frac{1}{\eps})$ blocks. The total number of queries made is $\polylog k \log \frac{1}{\eps}$. 
\end{proof}

\subsubsection{Oblivious Channel}
Now, we consider the oblivious channel. In this model, we assume the adversary can not read the codeword. Same as the Hamming case, we want to send a description of a permutation to the decoder. The decoder can then use description to recover the permutation $\pi$ which is used to encode the message. We can use the same binary error correcting code from Lemma \ref{lem:smallMsgLDC} to encode the description.


We now give the construction.

\begin{construction}
	\label{construct:randEncLDCEditOblivious}
	\par We construct an $(n,k=\Omega(n),\delta = O(1),q = \polylog k\log\frac{1}{\eps},\eps)$-LDC with randomized encoding against the oblivious channel model.
	
	Let $\delta_0$, $\gamma_0$ be some proper constants in $(0,1)$.
	
	Let $(\Enc_0,\Dec_0)$ be an asymptotically good $(n_0,k_0,d_0)$ error correcting code for Hamming distance on alphabet set $\Sigma = \{0,1\}^{10\log k}$. Here we pick $n_0 = O(\log\frac{1}{\eps})$, $k_0 = \gamma_0n_0$, and $d = 2\delta_0 n_0 + 1$.

	Let $\mathbb{C}_0:\{0,1\}^{2\log k}\rightarrow \{0,1\}^{10\log k}$ be the asymptotically good code for edit error as described above that can tolerate a $\delta'$ fraction of edit error.
	
	The encoding  function $\Enc:\{0,1\}^k\rightarrow \{0,1\}^n$ is a random function as follows:
	\begin{enumerate}
		\item Encode the message $x$ with Construction \ref{construct:randEncLDCEdit} without doing steps 4,5, and 6. This does not require knowing permutation $\pi$. View the sequence we get as a binary string. View the output as a binary string $y$ with length $n/10$;
		\item Let $N_1 = \frac{n}{20\log k}$. Generate a random seed $r\in \{0,1\}^{d}$ of lengh $d$. Here, $d = O(\kappa \log N_1 + \log(1/\eps_{\pi})) = O(\log n\log \frac{1}{\eps})$. Use $r$ to sample a $\eps_{\pi} = \eps/10$-almost $\kappa = O(\log \frac{1}{\eps})$-wise independent random permutation $\pi: [N_1] \rightarrow [N_1]$ using Theorem \ref{thm:almosttwiserandperm};
		\item Let $\delta_1$ be a properly chosen constant larger than $\delta$. Encoding $r$ with an $( n/20,  d , \delta_1 n )$ error correcting code from Lemma \ref{lem:smallMsgLDC}, we get $ z \in \{0,1\}^{ n/10} $;
		\item View $y\circ z$ as a sequence in $\{0,1\}^{n/10}$. Divide $y \circ z\in \{0,1\}^{n/10}$ into small blocks of size $\log k$. Write $y\circ z$ as $y\circ z = B_1\circ B_2\circ \cdots \circ B_{n/(10\log k)}$;
		\item Permute first half of blocks with random permutation $\pi$ to get $u' = B^{(1)}_1\circ B^{(1)}_2\circ \cdots \circ B^{(1)}_{n/(10\log k)}$ such that $B^{(1)}_{\pi(i)} = B_i$ for $i\leq n/(20 \log k)$ and $B^{(1)}_i = B_i$ for $i$ larger than $n/(20 \log k)$;
		\item Let $b_i$ be the binary representation of $i$, for each $i\in n/(10 \log k)$, encode $b_i\circ B^{(1)}_i$ with code $\mathbb{C}_0$ to get $B^{(2)}_i$;
		\item Output $u = B^{(2)}_1\circ B^{(2)}_2\circ \cdots \circ B^{(2)}_{n/(10\log k)}$.
		
	\end{enumerate}

	The decoding function $\Dec: [k]\times \{0,1\}^n\rightarrow \{0,1\}$ takes two inputs, an index $i_0\in [k]$ of the message bit the decoder wants to know and $\omega \in \{0,1\}^n $, the received (possibly corrupted) codeword. It proceeds as follows:
	
	\begin{enumerate}
		\item Search for at most $O(\log n \log \frac{1}{\eps})$ blocks to decode the random seed $r$;
		\item Use $r$ to generate the random permutation $\pi$;
		\item Run the same decoding algorithm as in the Construction \ref{construct:randEncLDCEdit} on the first half of $\omega$ to decode $x_{i_0}$.
	\end{enumerate}
	
\end{construction}

\begin{lemma}
	\label{fixed_oblivious}
	The above construction \ref{construct:randEncLDCEdit} gives an efficient $(n,k = \Omega(n),\delta = O(1),q=\polylog(n)\log\frac{1}{\eps}, \eps)$ randomized LDC against an oblivious channel with edit error.
\end{lemma}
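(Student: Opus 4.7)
The plan is to establish three facts that mirror the structure of Lemma \ref{fixed_sharedrand} but account for the oblivious model: polynomial-time efficiency, the stated query bound, and decoding correctness with failure probability at most $\eps$. Efficiency is immediate, since every ingredient---Construction \ref{construct:randEncLDCEdit} without its random-mask step, the pseudorandom permutation from Theorem \ref{thm:almosttwiserandperm}, the small-message LDC of Lemma \ref{lem:smallMsgLDC}, and the edit code $\mathbb{C}_0$ applied to pieces of length $O(\log k)$---runs in time polynomial in $n$.

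The substantive task is to show that the decoder first recovers the seed $r$ from the second half of $\omega$, and then, armed with $\pi$, recovers $x_{i_0}$ from the first half. For the seed, observe that the second half consists of $\Theta(n/\log k)$ blocks, each encoding some $b_i \circ B^{(1)}_i$ via $\mathbb{C}_0$, and these indices are \emph{not} permuted, so the searching algorithm from \cite{ostrovsky2015locally} can locate any desired block. Since the adversary is oblivious and the error budget is $\delta n$, Claim \ref{claim} (applied to this part of the codeword) gives that for any fixed index $j$ the search returns $B^{(1)}_j$ correctly with probability $1 - O(\delta)$, and the results across different indices are correlated only through the fixed error pattern. Simulating each bit-query of the decoder from Lemma \ref{lem:smallMsgLDC} by such a search, the $O(d/\log k)$ outer blocks it inspects suffer at most an $O(\delta)$ fraction of block errors by a Chernoff bound; choosing $\delta_1 > \delta$ large enough relative to the constant tolerated by Lemma \ref{lem:smallMsgLDC}, this recovers $r$ with probability at least $1 - \eps/3$.

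Conditioned on recovering $r$ correctly, the decoder reconstructs $\pi$ and then the analysis of Lemma \ref{fixed_sharedrand} applies essentially verbatim to the first half, with two modifications. First, no masks are needed because the oblivious adversary commits to its edit pattern without seeing the codeword, so the set of errors is statistically independent of $\pi$. Second, $\pi$ is only $\eps_\pi$-almost $\kappa$-wise independent; since the decoder queries only $n_0 \le \kappa$ blocks of the first half, the joint distribution of their post-permutation positions is within total variation $\eps_\pi \le \eps/10$ of the truly uniform case, so the concentration statement of Lemma \ref{lem:rponeset} still holds up to an additive $\eps_\pi$ slack. The remaining analysis---bounding the fraction of unrecoverable blocks among the $n_0$ searched and then invoking $\Dec_0$---goes through to give success probability $1 - 2\eps/3$ for this stage. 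A union bound over the two stages then yields total failure probability at most $\eps$.

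The query count combines $O(d/\log k) = O(\log\frac{1}{\eps})$ block searches for the seed with $n_0 = O(\log\frac{1}{\eps})$ block searches for the message, each costing $\polylog(k)$ by Lemma \ref{search}, for a total of $\polylog(k)\log\frac{1}{\eps}$. The main obstacle to watch for is making the simulation of Lemma \ref{lem:smallMsgLDC}'s bit queries by block searches clean: one must verify that a block search failure translates into at most one corrupted \emph{block} of the concatenated code inside Lemma \ref{lem:smallMsgLDC} (not more), and that these block failures are sufficiently independent across the $O(d/\log k)$ randomly chosen queries for the standard Chernoff/Markov arguments feeding into Lemma \ref{lem:smallMsgLDC} to apply unchanged. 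Once this interface is pinned down, every other step reduces to prior lemmas.
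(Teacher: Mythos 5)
Your proposal takes essentially the same two-stage route as the paper's proof (split $\omega$ in half, recover the seed $r$ from the unpermuted second half via the searching algorithm and Lemma \ref{lem:smallMsgLDC}, then reconstruct $\pi$ and rerun the shared-randomness decoder on the first half), and it is correct. You actually fill in two points the paper leaves implicit in this lemma---the $\eps_\pi$ total-variation slack from the almost-$\kappa$-wise permutation and the interface between bit-queries of Lemma \ref{lem:smallMsgLDC} and the block searches---so this is the same argument with a bit more care, not a different one.
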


\begin{proof}
	The proof of efficiency of this construction follows directly from the proof of Lemma \ref{fixed_sharedrand}. 
	
	We denote the uncorrupted codeword by $u$ and the received codeword by $\omega$. We assume $\Delta_E(u,\omega)\leq \delta$. We first divide it into two parts $\omega_1$ and $\omega_2$ each of equal length such that $\omega = \omega_1\circ \omega_2$. Similarly, we have $u = u_1\circ u_2$ with $|u_1| = |u_2| = n/2$. We have $\Delta_E(u_1,\omega_1)\leq \delta$ and $\Delta_E(u_2,\omega_2)\leq \delta$. 
	
	The first step is to decode $r$. By Lemma \ref{lem:smallMsgLDC}, we need to query $O(d) $ bits of $z$ to decode $r$ with high probability $1-\mathsf{neg}(k)$. Thus, we need to query $O(\log k \log \frac{1}{\eps})$ blocks in the second half of $\omega$. We can then use the same searching algorithm described in the proof of Lemma \ref{fixed_sharedrand} to query each block. Since we randomly choose blocks to query, we can recover $r$ with probability $1-\mathsf{neg}(k)$ by making $\polylog k \log \frac{1}{\eps}$ queries to $\omega_2$.
	
	Once we know $r$, we can use it to generate the random permutation $\pi$. We run the decoding algorithm from Construction \ref{construct:randEncLDCEdit} with input $i_0$ and $\omega_1$. The remaining part is the same as the proof of Lemma \ref{fixed_sharedrand}. Notice that recovering $r$ takes $\polylog k \log \frac{1}{\eps}$ queries. The query complexity for our code is still $\polylog k \log \frac{1}{\eps}$.
\end{proof}

\subsection{Construction with Flexible Failure Probability}

\subsubsection{Shared Randomness}
We now give the construction for flexible failure peobability against edit error. 

\begin{construction}
	\label{construct:randEncLDCEditFlex}
	We construct an $(n,k = \Theta(n/\log n),\delta = O(1))$ randomized LDC with query complexity function $q = \polylog k \log \frac{1}{\eps}$ for any given failure probability $\eps$.
	
	Let $\pi$ be a random permutation. And $r_i\in \{0,1\}^{\log k}$ for each $i\in {n/(10\log k)}$ be random masks. Both $\pi$ and $r_i$'s are shared between the encoder and decoder.
	
	With properly picked constant $\gamma_0$ and $\delta_0$. Let $(\Enc_0, \Dec_0)$ be a $(n_0, k_0, d_0)$ error  correcting code on alphabet set $\{0,1\}^{\log k}$. Here, $n_0 =  \gamma_0^{-1}  \log n $, $k_0 = \log n$, $d_0 =2\delta_0   n_0+1$, for constant $\gamma_0, \delta_0 \in [0,1]$.

	For each $i\in [\log n]$, let $\Enc_i$ be the encoding algorithm of an $(n_i, k, \delta_i = O(1), q_i, \eps = 2^{-2^i})$ randomized LDC from Construction \ref{construct:randEncLDCEdit} without steps 4,5, and 6. Let $\Dec_i$ be the correponding decoding algotithm.
	
	Let $\mathbb{C}_1:\{0,1\}^{3\log k}\rightarrow \{0,1\}^{15\log k}$ be the asymptotically good code for edit error as described above that can tolerate a $\delta'$ fraction of edit error.

	Encoding function $\Enc:\{0,1\}^{k = \Omega(n)} \rightarrow \{0,1\}^{n}$ is  as follows.
	
	\begin{enumerate}
		\item On input $x \in \{0,1\}^k$, compute $y_i = \Enc_i(x)$ for every $i\in [\log n]$. As before, we view $y_i = B^i_1\circ  B^i_2\circ \cdots \circ B^i_{N}$ as the concatenation of $N$ symbols over alphabet $\{0,1\}^{\log k}$;
		
		\item  Let $M$ be a $ \log n \times N$ matrix such that $M[i][j]\in \{0,1\}^{\log k}$ is the $j$-th symbol of $y_i$;
		
		\item Let $M_j$ be the $j$-th column of $M$ for each $j\in [N]$. We compute $z_j  = \Enc_0(M_j)$. Notice that $z_j = \Enc_0(M_j)\in (\{0,1\}^{\log k})^{n_0}$ is a string of length $n_0$ over alphabet $\{0,1\}^{\log k}$;
		
		\item Let $y^{(0)}$ be the concatenation of $z_j$'s for $j\in [N]$. Then $y^{(0)} = z_1\circ z_2 \cdots \circ  z_N$ is a string of length $ n_0N$ over alphabet $\{0,1\}^{\log k}$. Let $n = 15\log (k)n_0N = O(k\log k)$;
		
		
		\item Permute the symbols of $y^{(0)}$ with permutation $\pi$ to get $y^{(1)} = B^{(1)}_1\circ  B^{(1)}_2\circ \cdots \circ B^{(1)}_{n_0N}$ such that $B^{(1)}_{\pi(i)} = B^{(0)}_i$;
		
		\item Since $n_0N = O(k)$, we assume $n_0N = ck$ for some constant $c$. When $k$ is large enough, we assume $ck<k^2$. Let $b_i\in \{0,1\}^{2\log k}$ be the binary representation of $i$ for each $i\in [n_0N]$. We compute $B^{(2)}_i = \mathbb{C}_1(b_i\circ (B^{(1)}_i\oplus r_i))\in \{0,1\}^{10\log k}$ for each $i\in [n_0N]$. We get $y = B^{(2)}_1\circ  B^{(2)}_2\circ \cdots \circ B^{(2)}_{N'})\in (\{0,1\}^{15\log k })^{n_0N}$;
		
		\item Output $y$ as a binary string of length $n$.

	\end{enumerate}
	
	Decoding function $\Dec$ is a randomized algorithm takes three inputs: an index of the bit the decoder wants to see,the received codeword $\omega$, and the desired failure probability $\eps$. It can be described as follows.
	
	\begin{enumerate}

		\item On input $ i_0, \omega, \eps  $, find the smallest $i$ such that $2^{-2^i} \leq \eps$. If it cannot be found, then query the whole $\omega$; 
		
		\item Compute $w = \Dec_i(i_0, y_i)$ but whenever $\Dec_i$ wants to query an $j$-th symbol of $y_i$, we decode $z_j$. Notice that $z_j = \Enc_0(M_j)$ contains $n_0$ symbols over alphabet $\{0,1\}^{\log k}$. Each symbol is encoded in a block in $y$. The indices of these blocks can be recovered by $\pi$. We search each of these $n_0$ blocks in $\omega$. We can then get the $j$-th block of $y_i$ from $\Dec_0(z_j)$;
		
		\item Output $w$.
		
	\end{enumerate}

\end{construction}

\begin{lemma}
	\label{flex_sharerand}
	The above construcion \ref{construct:randEncLDCEditFlex} gives an efficient $(n,k = \Theta(n/\log n),\delta = O(1))$ randomized LDC that can recover any bit with probability at least $1-\eps$ for any constant $\eps \in (0,1)$, with query complexity $q = \polylog k \log \frac{1}{\eps}$.
\end{lemma}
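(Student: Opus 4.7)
The plan is to mirror the analyses of Lemma \ref{fixed_sharedrand} and the Hamming flexible construction, threading them through the extra layer of column encoding by $\Enc_0$. Efficiency is essentially routine: the modified $\Enc_i$ (Construction \ref{construct:randEncLDCEdit} without steps 4--6), the column-wise encoding by $\Enc_0$, the shared random permutation $\pi$, and the outer edit code $\mathbb{C}_1$ are each polynomial time, and the block size $15\log k$ of $\mathbb{C}_1$ is small enough that even the greedy code from Lemma \ref{greedily constructed insdel code} costs only $\poly(n)$ per call. The rate $k = \Theta(n/\log n)$ follows because the matrix $M$ has $\log n$ rows and $N = \Theta(k/\log k)$ columns, each column blowing up by a constant factor under $\Enc_0$ and then by $15$ under $\mathbb{C}_1$.

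For query complexity, on input $(i_0, \omega, \eps)$ the decoder finds the smallest $i$ with $2^{-2^i} \leq \eps$, hence $i = O(\log \log(1/\eps))$. The inner decoder $\Dec_i$ makes $q_i = O(\log(1/\eps_i)) = O(\log(1/\eps))$ symbol queries to $y_i$. Each such symbol query is translated into a decoding of the column $z_j = \Enc_0(M_j)$: the decoder uses $\pi$ to determine where the $n_0 = O(\log n)$ symbols of $z_j$ landed in $y^{(1)}$, and for each of them it runs the Ostrovsky--Paskin-Cherniavsky search procedure on $\omega$ costing $\polylog k$ queries each. Thus the total query count is $q_i \cdot n_0 \cdot \polylog k = \polylog k \cdot \log(1/\eps)$, as claimed.

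The correctness argument composes three ingredients. First, by Lemma \ref{search} applied to the outer edit code $\mathbb{C}_1$, a $1-O(\delta)$ fraction of the $n_0 N$ blocks of $y$ (viewed as a weighted sorted list exactly as in the proof of Lemma \ref{fixed_sharedrand}) are correctly recovered with probability $1 - \mathsf{neg}(k)$; call the remaining ones ``bad.'' Second, for each column index $j$ queried by $\Dec_i$, the $n_0$ block positions carrying $z_j$ are a fixed subset of $[n_0 N]$, and their images under the shared random permutation $\pi$ are a uniformly random $n_0$-subset of block positions in $y$; applying Lemma \ref{lem:rponeset} to this subset versus the ``bad'' set, the number of bad blocks among them exceeds $\delta_0 n_0$ with probability at most $2^{-\Omega(n_0)} = n^{-\Omega(1)}$, so $\Dec_0$ correctly recovers $z_j$ and in particular $M[i][j]$. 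Third, a union bound over the $q_i$ column decodings shows that with probability $1-o(1)$ every symbol query returns the correct value, after which $\Dec_i$ outputs the correct bit with probability $\geq 1-\eps_i \geq 1-\eps$ by the correctness statement of Lemma \ref{fixed_sharedrand}.

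The main obstacle is step two: justifying that the set of ``bad'' block positions is essentially independent of $\pi$, so that the concentration bound of Lemma \ref{lem:rponeset} is applicable. This is exactly where the shared random masks $r_i$ enter the picture, as in the proof of Lemma \ref{fixed_sharedrand}: a computation analogous to the one at the start of the proof of Theorem \ref{thm:LDCHamming1sr} shows that the distribution of the outputted codeword $y$ is independent of $\pi$, so the adversary's chosen error pattern (and hence which blocks end up bad after the $\mathbb{C}_1$-decoding step of the search procedure) is information-theoretically independent of the randomness that determines which column each block belongs to. Once this independence is pinned down the remaining calculations are a routine bookkeeping of constants, using the same choices of $\delta_0, \delta', \gamma_0$ as in Lemma \ref{fixed_sharedrand}.
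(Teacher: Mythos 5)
Your proof follows the paper's overall structure (random masks to make the adversary's bad block set independent of $\pi$, Lemma~\ref{search} for block recovery, Lemma~\ref{lem:rponeset} for the permutation concentration, and a top-level Hamming decode via $\Dec_i$), and the efficiency and query-counting steps are fine. The issue is in how you combine the column-failure events. Bounding each column's failure probability by $2^{-\Omega(n_0)}=n^{-\Omega(1)}$ and union-bounding over the $q_i=O(\log(1/\eps))$ queried columns yields a failure probability of order $\log(1/\eps)\cdot n^{-\Omega(1)}$. That is fine for the lemma as literally stated (constant $\eps$), but the flexible-failure definition — and the claim the paper is actually building toward — requires $1-\eps$ for every $\eps\in[0,1]$, including $\eps$ as small as $2^{-\Theta(k)}$, and in that regime $n^{-\Omega(1)}\gg\eps$. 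Your argument also never uses the error-correction of $\Enc_0^{(i)}$: you insist that every queried column decode correctly, which is both wasteful and the reason you can only get a polynomially small failure. The paper's ``rest follows from Lemma~\ref{fixed_sharedrand}'' points to the tighter route: apply Lemma~\ref{lem:rponeset} once to the full set of $n_0^{(i)}\cdot n_0 = \Theta(\log(1/\eps)\log n)$ block positions queried, which concentrates with tail $\exp(-\Omega(\log(1/\eps)\log n))\ll\eps$; then by averaging, only an $O(\delta/\delta_0)$ fraction of the $n_0^{(i)}$ columns can have more than a $\delta_0$ fraction of bad positions, and $\Dec_i$ (an asymptotically good Hamming decoder) absorbs those column failures.

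Separately, the closing sentence of your step three is circular. Once every queried column returns the correct symbol, the stripped-down $\Dec_i$ (Construction~\ref{construct:randEncLDCEdit} minus steps 4--6) is just a deterministic Hamming decode of $\Enc_0^{(i)}$ and succeeds with probability $1$, not $1-\eps_i$; the $1-\eps_i$ guarantee of Lemma~\ref{fixed_sharedrand} already incorporates the symbol-recovery failures you have just conditioned away, so invoking it again double-counts. The clean statement is: with probability $\geq 1 - O(\eps_i)$ at most a small constant fraction of the queried columns fail, and conditioned on that $\Dec_i$ succeeds deterministically.
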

\begin{proof}
	In the encoding, we need to encode $n_0N = O(n/\log n)$ blocks with edit error code $\mathbb{C}_0$. And in the decoding, we need decode at most $\polylog n \log \frac{1}{\eps}$ blocks. The time caused by the second layer is polynomial in $n$. Thus, our construction is efficient.
	
	After a $\delta$ fraction of edit error. There is some constant $\delta_1 = O(\delta)$ such tath for at least $1-\delta_1$ fraction of blocks, we can use the searching algorithm to find and decode them correctly with probability $1-\mathsf{neg}(n)$. Thus, as is described in the construction, when ever we want to query the $j$-th symbol of $y_i$, we fist decode $z_j$. The symbols of $z_j$ is encoded in $n_0$ blocks in $\omega$. With $\pi$, we know the indices of these $n_0$ blocks. We can than perform the search as described in the proof of Lemma \ref{fixed_sharedrand}. By picking proper $n_0$, the decoding of $z_j$ fails with a small probability. The rest of analysis follows directly from the proof of Lemma \ref{fixed_sharedrand}.

	For the query complexity, we need to query $O(n_0\log \frac{1}{\eps})$ blocks. Since $n_0 = O(\log n) = O(\log k)$, the total number of queries made is still $\polylog k \log \frac{1}{\eps}$.
\end{proof}

\subsubsection{Oblivious Channel}
Our construction for flexible failure probability against oblivious channel combines Construction \ref{construct:randEncLDCEditOblivious} and Construction \ref{construct:randEncLDCEditFlex}. More specifically, following the Construction \ref{construct:randEncLDCEditFlex}, we replace the encoding functions $\Enc_i$ for each $i\in [\log n]$ from Construction \ref{construct:randEncLDCEdit} with the encoding functions from Constructon \ref{construct:randEncLDCEditOblivious}. The analysis follows directly. We omit the details.

	\section{Discussions and open problems}
	In this work we initiated a study of locally decodable codes with randomized encoding, and we give constructions for both Hamming errors and edit errors, that significantly improve the rate-query tradeoff of known constructions. Our work leaves many interesting open problems. We list some below.
	\begin{description}
	\item[Question 1:] Are there constructions of locally decodable codes with randomized encoding, such that the decoding can succeed without known the randomness of encoding? If so how do such codes perform?
	\item[Question 2:] If the adversary is allowed to know the randomness used by the encoding, is it still possible to achieve much better rate-query tradeoff by using locally decodable codes with randomized encoding?
	\item[Question 3:] Is it possible to construct locally decodable codes with a constant rate, while achieving flexible failure probability, by using randomized encoding?
	\item[Question 4:] Can we further improve the number of queries in our constructions for an oblivious channel?
	\end{description}
Finally, even standard locally decodable codes for edit errors are still are poorly understood. For example, for a constant fraction of error and constant alphabet size, the best approach is still due to \cite{ostrovsky2015locally} which needs at least $\polylog(k)$ queries. Can we prove a lower bound for query complexity in this case, or is there an LDC construction with constant number of queries? 
	
	\bibliographystyle{alpha}
	\bibliography{ldcbib}
\end{document}